\documentclass[10pt,conference,compsoc]{IEEEtran}

\usepackage[utf8]{inputenc} 

\usepackage{epsfig}
\usepackage{graphicx}
\usepackage{threeparttable}
\usepackage{amsmath}
\usepackage{amsfonts}
\usepackage{bm}
\usepackage{subcaption}
\usepackage{utfsym}
\usepackage{amsthm}
\usepackage{xcolor}
\definecolor{lightgray}{rgb}{0.83, 0.83, 0.83}
\definecolor{mygray}{rgb}{0.93, 0.93, 0.93}
\usepackage{lipsum}
\usepackage{mathtools}
\usepackage[makeroom]{cancel}
\usepackage{mathrsfs}
\usepackage{enumitem}
\usepackage{multirow}
\usepackage{tabularx}
\usepackage{tikz}
\newcommand*\emptycirc[1][1ex]{\tikz\draw (0,0) circle (#1);} 

\newcommand*\fullcirc[1][1ex]{\tikz\fill (0,0) circle (#1);}

\usepackage[T1]{fontenc}    
\usepackage{booktabs}      
\usepackage{nicefrac}       
\usepackage{microtype}      
\usepackage[most]{tcolorbox}
\usepackage{caption} 
\usepackage{setspace}
\usepackage[noend]{algorithmic}
\usepackage{algorithm}

\usepackage{newfloat}
\usepackage{listings}
\DeclareCaptionStyle{ruled}{labelfont=normalfont,labelsep=colon,strut=off} 
\floatstyle{ruled}
\newfloat{listing}{tb}{lst}{}
\floatname{listing}{Listing}
\usepackage{courier}  

\def\eqref#1{equation~\ref{#1}}

\def\1{\bm{1}}

\def\vm{{\bm{m}}}

\def\vs{{\bm{s}}}

\def\vx{{\textbf{x}}}

\def\vz{{\bm{z}}}

\DeclareMathAlphabet{\mathsfit}{\encodingdefault}{\sfdefault}{m}{sl}
\SetMathAlphabet{\mathsfit}{bold}{\encodingdefault}{\sfdefault}{bx}{n}

\makeatletter
\newcommand{\distas}[1]{\mathbin{\overset{#1}{\kern\z@\sim}}}
\makeatother

\theoremstyle{definition}
\newtheoremstyle{bfnote}%
  {}{}
  {\itshape}{}
  {\bfseries}{.}
  { }{\thmname{#1}\thmnumber{ #2}\thmnote{ (#3)}}
\theoremstyle{bfnote}
\newtheorem{definition}{Definition}

\newtheorem{theorem}{Theorem}
\newtheorem*{theorem*}{Theorem}

\newtcolorbox{colorquote}[1][]{
    boxrule=0.5pt,
    left=1pt,
    right=1pt,
    top=1pt,
    bottom=1pt,
    colback=black!5,
    colframe=black!55,
    notitle,
    enhanced,
    breakable,
}
\definecolor{darkgreen}{rgb}{0.0, 0.5, 0.0}
\lstdefinestyle{mystyle}{
language=C, 
basicstyle=\ttfamily\footnotesize,
  keywordstyle=\color{blue},
  commentstyle=\color{darkgreen},
  emph={return},
  emphstyle=\color{magenta},
  numbers=none,
  frame = none,
  breaklines=true,
  breakatwhitespace=true,
  tabsize=2,
  columns=fullflexible,
  captionpos=b,
  escapechar=@
}

\newcommand{\ours}{\textsc{EnBecome}}

\usepackage[normalem]{ulem}

\newcommand{\todoc}[2]{{\textcolor{#1}{\textbf{#2}}}}
\newcommand{\todored}[1]{{\todoc{red}{\textbf{[#1]}}}}

\newcommand{\todoblue}[1]{\todoc{blue}{\textbf{[#1]}}}

\newcommand{\txhan}[1]{\todoblue{txhan: #1}}

\newcommand{\ws}[1]{\todored{WS: #1}}

\newcommand{\circled}[1]{\textcircled{\raisebox{-.9pt}{#1}}}

\usepackage{xcolor}

\usepackage{cite}
\usepackage{hyperref}
\hypersetup{
    colorlinks=true,
    linkcolor=blue,
    citecolor=blue,
}

\usepackage{url}
\usepackage{orcidlink}

\begin{document}

\title{Enhancing and Reporting Robustness Boundary of Neural Code Models\\ for Intelligent Code Understanding}

\author{
Tingxu Han$^{\orcidlink{0000-0003-1821-611X}}$,  
Wei Song$^{\orcidlink{0000-0002-1240-8412}}$,
Weisong Sun$^{\orcidlink{0000-0001-9236-8264}}$, 
Hao Wu$^{\orcidlink{0009-0001-5945-7510}}$,
Chunrong Fang$^{\orcidlink{0000-0002-9930-7111}}$, 
 \\
Yuan Xiao$^{\orcidlink{0009-0009-3166-8007}}$,
Xiaofang Zhang$^{\orcidlink{0000-0002-8667-0456}}$, Zhenyu Chen$^{\orcidlink{0000-0002-9592-7022}}$,
and 
Yang Liu $^{\orcidlink{0000-0001-7300-9215}}$

\IEEEcompsocitemizethanks{
\IEEEcompsocthanksitem Tingxu Han, Chunrong Fang, Yuan Xiao, and Zhenyu Chen are with the State Key Laboratory for Novel Software Technology, Nanjing University, Nanjing 210093, China (e-mail: \{txhan, 
dz21320004\}@smail.nju.edu.cn, \{fangchunrong, zychen\}@nju.edu.cn).
Zhenyu Chen is also with Shenzhen Research Institute of Nanjing University, China.
\IEEEcompsocthanksitem Wei Song is with the School of Computer Science and Engineering, University of New South Wales, New South Wales 2052, Australia (e-mail: wei.song1@unsw.edu.au). 
\IEEEcompsocthanksitem Weisong Sun and Yang Liu are with the School of Computer Science and Engineering, Nanyang Technological University, Nanyang 639798, Singapore (e-mail: \{weisong.sun, yangliu\}@ntu.edu.sg).
\IEEEcompsocthanksitem Hao Wu and Xiaofang Zhang are with the School of Computer Science and Technology, Soochow University, Suzhou 215006, China (e-mail: 20224207007@stu.suda.edu.cn, xfzhang@suda.edu.cn).
\IEEEcompsocthanksitem Weisong Sun is the corresponding author.
Tingxu Han and Wei Song have an equal contribution.
}
}

\maketitle
\begin{abstract}
With the development of deep learning, Neural Code Models (NCMs) such as CodeBERT and CodeLlama are widely used for code understanding tasks, including defect detection and code classification. However, recent studies have revealed that NCMs are vulnerable to adversarial examples—inputs with subtle perturbations that induce incorrect predictions while remaining difficult to detect. Existing defenses attempt to address this issue through data augmentation to empirically improve robustness, but they are costly, provide no theoretical robustness guarantees, and typically require white-box access to model internals, such as gradients.

To address the above challenges, we propose \ours{}, a novel black-box training-free and lightweight adversarial defense.
\ours{} is designed to both enhance empirical robustness and report certified robustness boundaries for NCMs. 
\ours{} operates solely during inference, introducing random, semantics-preserving perturbations to input code snippets to smooth the NCM’s decision boundaries. 
This smoothing enables \ours{} to formally certify a robustness radius within which adversarial examples can never induce misclassification, denoted certified robustness.
We conduct comprehensive experiments across multiple NCM architectures and tasks. 
Results show that \ours{} significantly reduces attack success rates while maintaining high accuracy. For example, in defect detection, it reduces the average ASR from 42.43\% to 9.74\% with only a 0.29\% drop in accuracy. Furthermore, \ours{} achieves an average certified robustness radius of 1.63, meaning that adversarial modifications to no more than 1.63 identifiers are provably ineffective. 
\end{abstract}
\section{Introduction}
\label{sec:intro}


Recently, with the rapid development of deep learning (DL), it has become more and more popular to deploy it into intelligent code understanding tasks in software engineering~\cite{2022-Use-of-DL-in-SE-Research, 2023-LLMs-for-SE-A-Literature-Review, 2023-Survey-on-LLMs-for-SE}. 
Currently, DL models are pre-trained on a large-scale code dataset and then fine-tuned or prompted with a few task-specific data. 
These large pre-trained models or fine-tuned models (including large language models, LLMs) for code understanding can be collectively referred to as Neural Code Models (NCMs). 
Existing software systems are built on millions of lines of code~\cite{shin2010evaluating}, providing sufficient training data to pre-train NCMs.
Leveraging this extensive code corpus, NCMs like CodeBERT~\cite{feng2020codebert} and CodeLlama~\cite{roziere2023code} showcase exceptional performance on code understanding tasks, such as defect detection~\cite{wu2023large, zhou2019devign}.
Developers deploy these models by fine-tuning or prompting them with task-specific datasets.

\begin{figure}[t]
\centering
    \subfloat[The original code snippet.]
    {
        \includegraphics[width=0.836\linewidth]{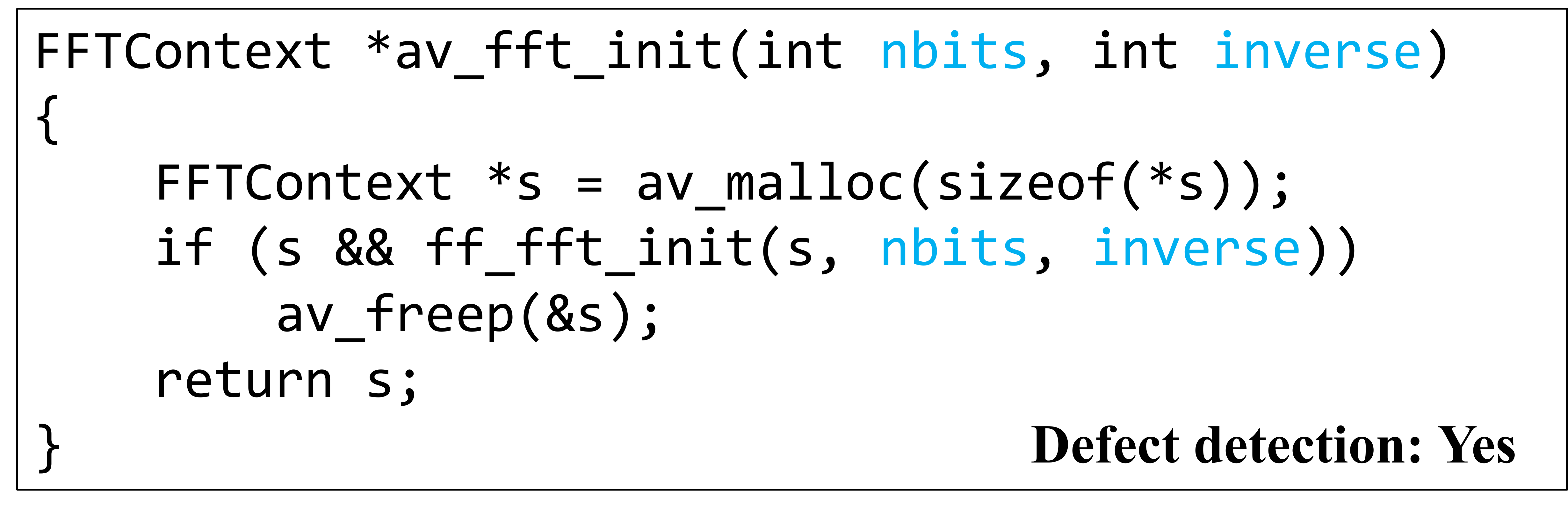}
        \label{fig:ori_sample}
    }
    \vfill
    \subfloat[The adversarial code snippet.]
    {
        \includegraphics[width=0.836\linewidth]{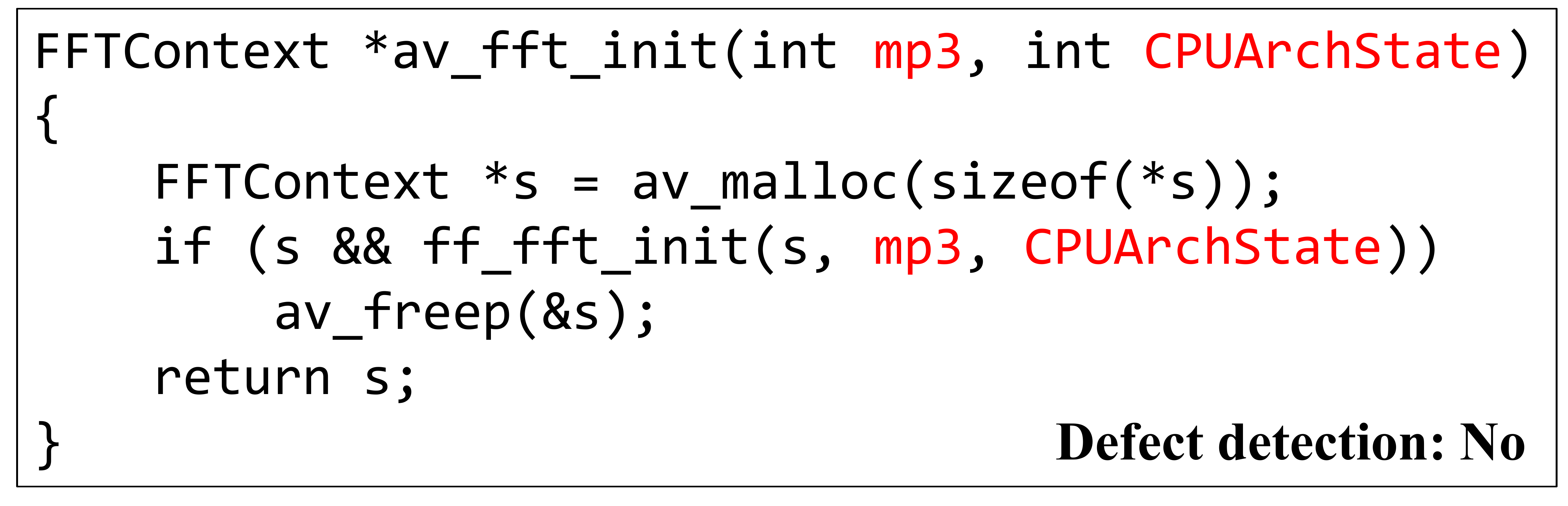}
        \label{fig:adv_sample}
    }
    \vspace{-5pt}
    \caption{An adversarial example on CodeBERT in defect detection. The adversarial code snippet (ACS) is crafted by only modifying the identifiers, but it leads to a misclassification from ``\texttt{Yes}'' to ``\texttt{No}''. The adversarial code snippet still contains the same defect as the original one.
    }
    \label{fig:case_ori_adv_example}
    \vspace{-8pt}
\end{figure}

\begin{figure*}[t]
    \centering
    \includegraphics[width=0.8\linewidth]{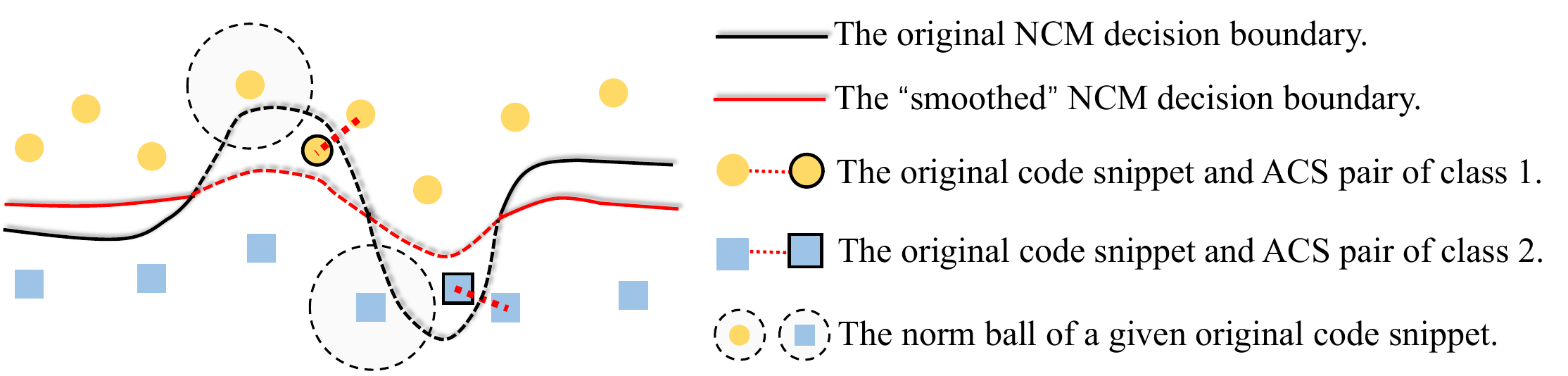}
    \caption{An intuitive illustration of \ours{}'s scenario and objective in defect detection. 
    The adversary crafts ACSs based on original code snippets to cross the original decision boundary, causing misclassification. The norm measures the difference between the original code snippets and their corresponding ACSs. \ours{} introduces controlled randomness into the input space during inference, averaging predictions over perturbed inputs within a norm ball to achieve a ``smoothed'' decision boundary (from the black line to the red line).}
    \label{fig:intuitive_overview}
    \vspace{-7pt}
\end{figure*}

However, recent studies~\cite{2024-CodeLM-Security, bielik2020adversarial, chen2017adversarial, jha2023codeattack, liu2021practical, quiring2019misleading, severi2021explanation} have demonstrated that NCMs are vulnerable to adversarial examples (AEs).
An adversary crafts AEs by making subtle perturbations on the identifiers of code snippets. 
The perturbations are always imperceptible for humans but significant for NCMs, inducing misclassification~\cite{2022-ALERT, 2020-MHM,2023-CODA}.
Figure~\ref{fig:case_ori_adv_example} showcases an intuitive example.  
The original code snippet in Figure~\ref{fig:ori_sample} contains a defect where the function may return a dangling pointer. NCM CodeBERT correctly identifies this and predicts ``\texttt{Yes}''. In contrast, the adversarial code snippet in Figure~\ref{fig:adv_sample}, which modifies only the identifiers, causes the model to misclassify it as ``\texttt{No}'', even though the defect is still present.

Various techniques have been developed to enhance the adversarial robustness of NCMs by leveraging data augmentation and post-training. These methods typically begin by augmenting the training dataset and then use the augmented data in a post-training phase to improve robustness.
For instance, RoPGen~\cite{2022-ropgen} implements data augmentation by utilizing automatic coding style transformations to create robust coding style patterns that are difficult for attackers to manipulate. 
SPACE, on the other hand, generates worst-case, semantic-preserving examples in the continuous embedding space rather than the discrete token space.
Both approaches incorporate the augmented samples into the training pipeline, forming the basis of adversarial training. 
However, adversarial training is computationally expensive and requires full access to the model's parameters, making it impractical for large-scale NCMs, such as StarCoder~\cite{2023-StarCoder} and CodeLlama~\cite{roziere2023code}.

To handle the above weaknesses, we propose \ours{} (\underline{En}hancing and reporting robustness \underline{B}oundary of n\underline{E}ural \underline{CO}de \underline{M}odels for  Intelligent Code Und\underline{E}rstanding), a novel black-box and training-free method that enhances the robustness of NCM for intelligent code understanding and reports certified robustness boundaries.
Unlike prior approaches, \ours{} operates entirely at inference time and does not require access to model parameters or additional training.

Intuitively, \ours{} hardens robustness by ``smoothing'' the decision boundaries of a given NCM.
Specifically, \ours{} introduces random, semantics-preserving perturbations to generate multiple variants of the input and aggregates their predictions through majority voting. 
This process effectively causes the model to perceive each input as a distribution over a local neighborhood, rather than as a single point.
This makes the model less sensitive to small adversarial changes and more resistant to attacks.
Figure~\ref{fig:intuitive_overview} illustrates the intuition.
The black line represents the original decision boundary, which adversarial examples can easily cross by making small changes to the input. By contrast, the red line represents the ``smoothed'' boundary achieved by \ours{}, which generalizes better and requires an adversary to introduce significantly larger perturbations to succeed.
Although many random smoothing techniques have been proposed in vision~\cite{li2023sok,cohen2019certified,fang2025multi} and NLP~\cite{ye2020safer,zhang2024text,cevher2025certified}, they operate in continuous or token-based spaces.
Previous randomized smoothing techniques in vision~\cite{li2023sok, cohen2019certified, fang2025multi} and NLP~\cite{ye2020safer, zhang2024text, cevher2025certified} are designed for continuous or token-based input spaces. Directly applying these methods to code area often breaks syntax and disrupts semantic structure. 
In contrast, \ours{} introduce semantics-preserving smoothing for NCMs in the discrete and syntax-sensitive $L_0$ space, achieving certified robustness while preserving code correctness.
The contribution of this paper can be summarized as follows:
\vspace{-2pt}
\begin{itemize}[leftmargin=8pt]
    \item We propose a black-box, training-free approach \ours{}, smoothing the decision boundary and improving the robustness of NCMs for intelligent code understanding.
    \item To the best of our knowledge, this is the first work to establish certified robustness for neural code models. We introduce a semantics-preserving smoothing technique and derive a theoretical robustness bound, guaranteeing that no adversarial example can succeed within this bound.
    \item We conduct comprehensive evaluations of \ours{} regarding effectiveness, time cost, and generalization. 
    \item We make all the implementation code of \ours{} and datasets used in our paper publicly available~\cite{2024-Our-Artifacts}.
\end{itemize}

\section{Background and preliminary}
\label{sec:background_preliminary}
NCMs interpret code using neural language models~\cite{palacio2024toward}. For instance, neural language models such as BERT~\cite{kenton2019bert} and T5~\cite{2020-T5} are fine-tuned and adapted specifically for intelligent code understanding tasks, resulting in NCMs like CodeBERT~\cite{feng2020codebert} and CodeT5~\cite{2021-CodeT5}.
In this section, we formalize the background of NCMs and intelligent code understanding tasks at first.
Subsequently, we introduce the definition of adversarial examples (AEs) that \ours{} aims to address and formalize \ours{}'s objectives.


\subsection{Code Understanding and NCMs}
\label{subsec:ncms_code_understanding}
\begin{figure*}[t]
    \centering
    \includegraphics[width=0.75\linewidth]{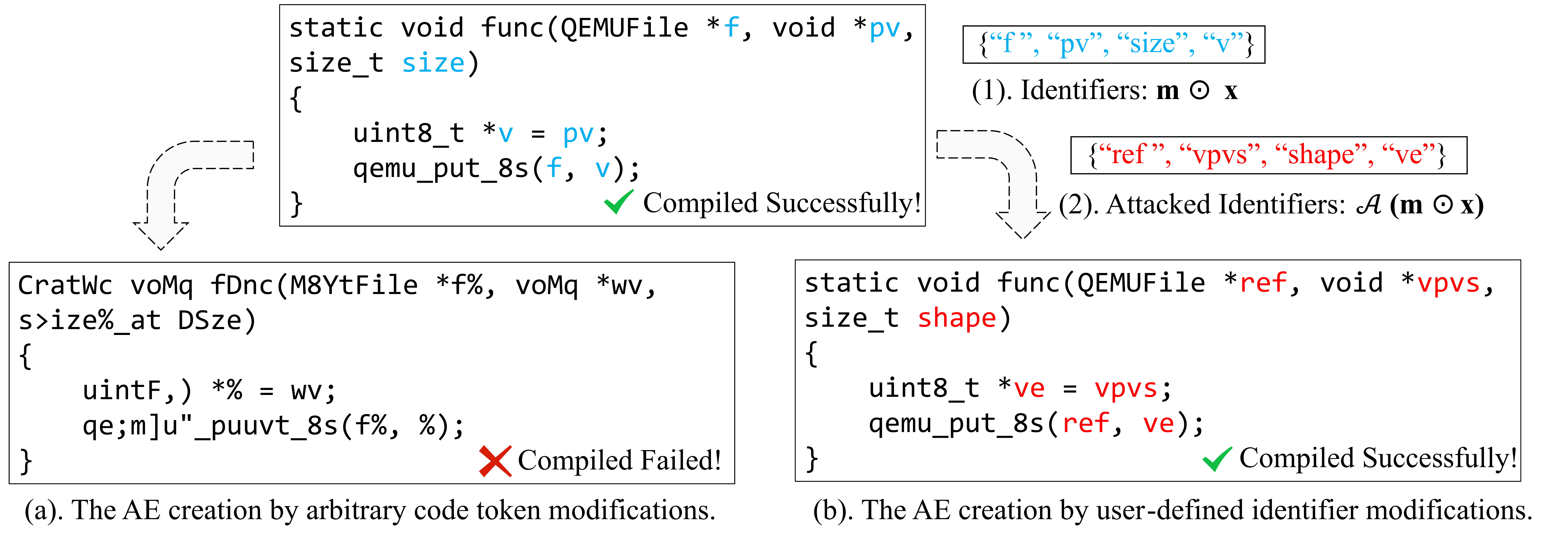}
    \caption{An intuitive illustration of an AE creation. Arbitrary code token modifications to create an AE cause syntax errors, making it uncompilable (a). User-defined identifier modifications preserve code syntax correctness and keep compilable (b).}
    \label{fig:AE_creation}
    \vspace{-5pt}
\end{figure*}
Code understanding is a challenging task that requires developers to assimilate extensive information related to code semantics and domain-specific concepts.
With the advancement of deep learning techniques, NCMs have been extensively utilized to effectively address various code understanding tasks, such as defect detection~\cite{zhou2019devign, wang2016automatically}, clone detection~\cite{2020-Functional-Code-Clone-Detection, wei2017supervised}.
Formally, given a dataset $\{\mathcal{X}, \mathcal{Y}\}$, an NCM $f(x)$ maps an input code snippet $\vx:=\{x_1, x_2, \dots, x_N\} \in \mathcal{X}$ to the ground truth label $c \in \mathcal{Y}$:
\begin{equation}
    f(\vx) = c, \vx \in \mathcal{X}, c \in \mathcal{Y}
    \label{eq:code_understanding}
\end{equation}

\subsection{Adversarial Examples (AEs)}
\label{subsec:adversarial_examples}
Adversarial examples (AEs) are inputs intentionally modified to cause the model to make incorrect predictions. In natural language tasks, adversaries can create AEs by changing arbitrary tokens~\cite{gao2018black, 2020-bertattack, 2020-TextFooler}. However, this approach is infeasible in code understanding due to strict syntax constraints: arbitrary token changes can easily lead to syntax errors, as shown in Figure~\ref{fig:AE_creation}a.
To ensure syntactic validity, adversarial attacks on code typically modify only user-defined identifiers~\cite{2020-MHM, 2022-ALERT, 2023-CODA, jha2023codeattack}, which preserves the structure and compilability of the program, as illustrated in Figure~\ref{fig:AE_creation}b.

Formally, we define a binary mask vector $\vm$ over a code snippet $\vx = \{x_1, x_2, \dots, x_N\}$ to indicate the positions of identifiers: $m_i = 1$ if $x_i$ is an identifier, and $m_i = 0$ otherwise.
Adversaries perturb $k$ identifiers based on $\vm$ to generate an AE $\vx'$, inducing $f(\vx') \ne c$.
The AE set $\mathcal{X}^*$, which means all possible AEs crafted by $\vx$, is formalized as:
\begin{equation}
\begin{aligned}
    & f(\vx') \ne c,\quad \vx' \in \mathcal{X}^*, \\
    & \mathcal{X}^* := \left\{ \vx' \mid \vx' = (1-\vm) \odot \vx + \mathcal{A}(\vm \odot \vx)     \right\}
\end{aligned}
\label{eq:formal_AE}
\end{equation}
Here, $\odot$ denotes element-wise multiplication, and $\mathcal{A}(\cdot)$ is an adversarial transformation applied to the extracted identifiers. 
Note that this transformation is consistent, meaning that each identifier is globally modified in a uniform and syntax-preserving manner across its occurrences in the code.
The final AE $\vx'$ combines the unmodified tokens $(1 - \vm) \odot \vx$ with the perturbed identifiers $\mathcal{A}(\vm \odot \vx)$.

\subsection{Objective of \ours{}}
\label{subsec:objective}
Existing defenses such as SPACE~\cite{2022-SPACE} and RoPGen~\cite{2022-ropgen} rely on costly post-training and lack theoretical guarantees, making them impractical and unreliable as NCMs scale. To overcome these limitations, we propose \ours{}, a training-free, black-box method that operates entirely at inference time. \ours{} improves empirical robustness while certifying a provable robustness boundary within which no successful adversarial example can exist.
In this section, we formalize the objectives of empirical and certified robustness, providing both theoretical definitions and intuitive insights.

\begin{definition}[Empirical Robustness]
    Empirical robustness means that NCMs should maintain consistent predictions for original samples and the corresponding AEs, which is formalized as follows:
    \begin{equation*}
        \centering
        \forall \vx \in \mathcal{X}, \vx' \in \mathcal{X}^*, f(\vx)=f(\vx'),
    \end{equation*}
    where ($\vx$, $\vx'$) means the pair of the original sample and its corresponding AE, $\mathcal{X}$ the original code snippet set and $\mathcal{X}^*$ the adversarial example set (defined by Eq~\ref{eq:formal_AE}). 
    \label{def:empirical_robustness}
\end{definition}
Most of the existing adversarial defense techniques focus on enhancing the empirical robustness through data augmentation~\cite{2022-SPACE} or disturbing embedding space~\cite{2022-ropgen}, claiming improved robustness against AEs.
However, this robustness is based on empirical results and does not provide a theoretical bound for future attacks.
In that case, we further introduce the definition of \textit{certified robustness}.


\begin{definition}[Certified Robustness]
    Certified robustness of an NCM around a given input $\vx$ means that, within a calculated theoretic bound, the NCM will always maintain consistent predictions for $\vx$ and the corresponding AE $\vx'$. This is formalized as the following:
    \begin{equation}
       \forall \vx' \in \mathcal{X}^*, f(\vx)=f(\vx')=c, c \in \mathcal{Y}, s.t. |\vx \oslash \vx'| \leq r.
    \end{equation}
    where $c$ denotes the ground truth label and $\mathcal{X}^*$ means all possible AEs crafted by the given original sample $\vx$.
    $\vx \oslash \vx'$ indicates the different identifier set between $\vx$ and $\vx'$.
    We define the theoretical bound $r$ as the certified robustness radius.
    \label{def:certified_robustness}
\end{definition}

Intuitively, certified robustness means that the model’s prediction remains unchanged when an adversary modifies up to $r$ identifiers in the input. In this case, no adversarial example within this bound can successfully fool the model.
Based on this definition, the objective of \ours{} is to construct an NCM $g(\cdot)$ that achieves both strong empirical robustness and certified robustness guarantees.


\section{Methodology}
\label{sec:meth}


\begin{figure*}[t]
    \centering
    \includegraphics[width=\linewidth]{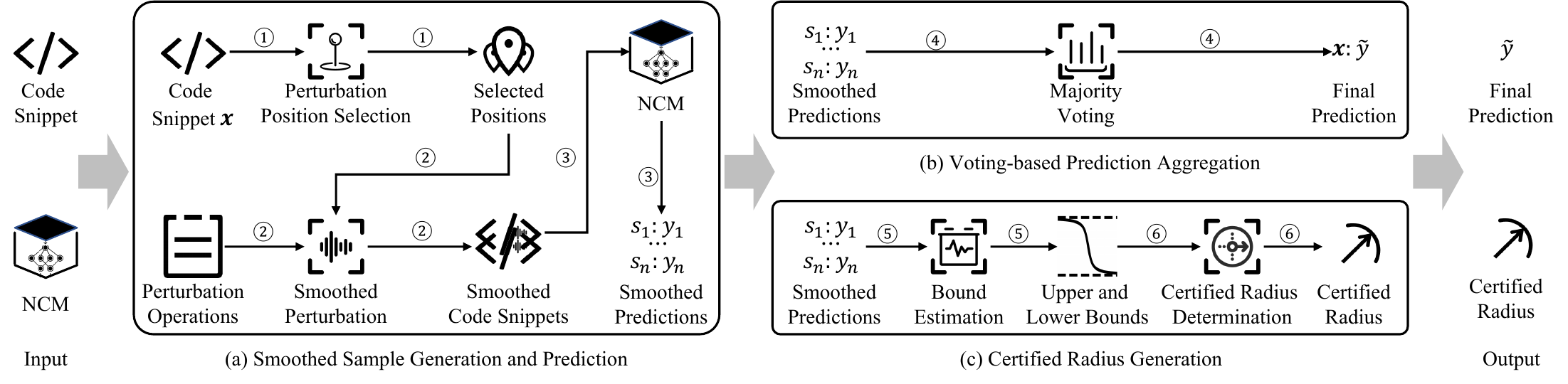}
    \caption{The workflow of \ours{}. \ours{} first generates smoothed samples and conducts predictions of them in phase (a).
    In phase (b), \ours{} aggregates the predictions and outputs the final prediction $\tilde{y}$.
    In phase (c), \ours{} generates the certified radius and reports the certified robustness.
    }
    \label{fig:workflow}
    \vspace{-18pt}
\end{figure*}

Figure~\ref{fig:workflow} illustrates the overall workflow of \ours{}. 
\ours{} is a black-box requiring no alterations to NCMs.
\ours{} reduces the success rate of adversarial attacks by smoothing NCM predictions. 
In addition, \ours{} can report the certified robustness boundary of NCMs. 
Specifically, given an input code snippet $\vx_0$ (potentially an adversarial example) and the target NCM for defense, \ours{} achieves robust prediction and reports certified robustness through three stages: (a) smoothed sample generation and prediction, (b) voting-based prediction selection, and (c) certified radius generation. 
In phase (a), \ours{} generates a set of smoothed code snippets and then collects the prediction results of the NCM for these code snippets. 
In phase (b), \ours{} aggregates the above predictions by a voting-based mechanism and outputs the final robust prediction.
Subsequently, \ours{} reports certified robustness by generating the certified radius.
\begin{table}[t]
    \centering
    \footnotesize
    \caption{The semantic-preserving operation used in our paper. Three different operations are considered.}
    \label{tab:op_list}
    \begin{tabular}{lp{3.8cm}l}
        \toprule
        Name & Description & Example \\
        
        \midrule
        
        
        --- & The original sample. & 
        \begin{lstlisting}[style=mystyle]
int f(void *env) 
        \end{lstlisting}\\
        
        \textbf{Insert} & Randomly insert a new character into the identifier. & {\raggedright\begin{lstlisting}[style=mystyle]
int f(void *enQv)
        \end{lstlisting}} \\
        
        \textbf{Replace} & Randomly replace a character in the identifier. & 
        \begin{lstlisting}[style=mystyle]
int f(void *enQ)
        \end{lstlisting} \\
        
        \textbf{Delete} & Randomly delete a character from the identifier. & 
        \begin{lstlisting}[style=mystyle]
int f(void *nv)
        \end{lstlisting} \\
        
        \bottomrule
    \end{tabular}
    \vspace{-8pt}
\end{table}

\begin{figure}
    \centering
    \includegraphics[width=\linewidth]{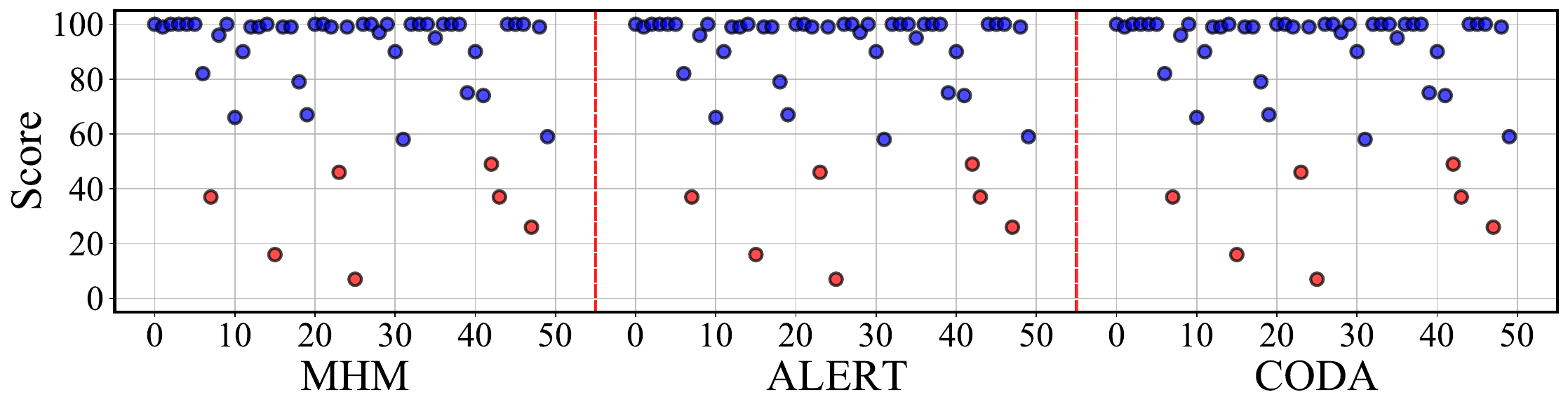}
    \vspace{-6pt}
    \setlength{\abovecaptionskip}{0pt}  
    \setlength{\belowcaptionskip}{0pt}  

    \caption{Evidence for our insight. 
   \ours{} generates $N$($N$=100) smoothed samples per code snippet, with the score indicating how many are predicted as the ground-truth label. 
   The color represents the final prediction $\tilde{y}$, where blue indicates a correct prediction and red indicates an incorrect one.}
    \label{fig:insight}
    \vspace{-18pt}
\end{figure}

\subsection{Smoothed Sample Generation and Prediction}
\label{subsec:smoothed_sample_generation_and_prediction}
Different from original samples, a successful AE (adversarial example) is usually crafted elaborately and easily destroyed by perturbation (supported by Figure~\ref{fig:insight}).
A smoothed sample means that a sample is modified around a given original sample while preserving the underlying structure. 
Traditional random smoothing techniques in natural language~\cite{2023-RanMASK, 2020-Safer} employ perturbations on arbitrary tokens of a given code snippet.
However, these approaches compromise the original semantics and syntax of the code, potentially leading to compilation failure.
Figure~\ref{fig:AE_creation} showcases an example.

Given an input code snippet $\vx_0$, \ours{} generates $N$ smoothed samples by perturbing identifiers in two steps: (\circled{1}) perturbation position selection and (\circled{2}) perturbation operation application.
In step (\circled{1}), \ours{} restricts perturbations to identifiers, as modifications to them do not break the code's syntax.
For each code snippet, we randomly select $k$ identifiers. All positions of each selected identifier are then consistently perturbed, and this process is repeated $N$ times to generate a set of smoothed samples, denoted as $s_1, s_2, \cdots, s_N$.
Importantly, position selection is performed at the identifier level. All occurrences of the same identifier are perturbed consistently, preserving code semantics and syntactic validity.
In practice, we randomly select 90\% of the identifiers for each code snippet.

In step (\circled{2}), \ours{} applies a semantic-preserving character-level edit to each selected identifier. 
Specifically, we define three semantic-preserving operations, including \texttt{Insert}, \texttt{Replace}, and \texttt{Delete}.
The operation \texttt{Insert} randomly inserts a new character into the identifier, operation \texttt{Replace} randomly replaces a character in the identifier with a new one, and operation \texttt{Delete} randomly deletes a character from the identifier. 
As all operations are conducted on identifiers, the perturbation is semantic-preserving and will not destroy the code syntax. 
Table~\ref{tab:op_list} illustrates them with corresponding examples.
These operations constitute a fundamental set of character-level edits that can express any transformation between identifier strings. In other words, any adversarial modification to an identifier can be decomposed into a sequence of these basic operations. 
To formally represent such transformations, we introduce the concept of a perturbation path $\mathcal{P}=\{o_1, o_2, \dots, o_L\}$, which denotes an ordered sequence of operations applied to an identifier.
The length of the perturbation path indicates the extent of the identifier to be perturbed.
Following the definition in Eq~\ref{eq:formal_AE}, we can obtain a perturbed sample $\vs$ by applying a path $\mathcal{P}$ for a given $\vx$:
\begin{equation}
    \begin{aligned}
     \vs &= (1-\vm) \odot \vx + \mathcal{P}(\vm \odot \vx) \\
      &=(1-\vm) \odot \vx + \prod_{i=1}^{|\vx|} \prod_{j=1}^{|\mathcal{P}|} o_{j}((\vm \odot \vx)_i), \forall \vx \in \mathcal{X}
\end{aligned}
\label{eq:random_smoothing}
\end{equation}
where $\vm$ is the mask matrix to select identifiers from $\vx$. 
Based on pre-defined operations, \ours{} produces a set of smoothed code snippets (\circled{2}). 
For example, given a code snippet \lstinline[style=mystyle]|int f(void *env) {...}|, \ours{} obtains a smoothed code after operation $Insert$, \lstinline[style=mystyle]|int f(void *env) {...}|. 
\ours{} inserts \texttt{Q} into the identifier $env$.
The insert position and character are randomly picked.
Similarly, \ours{} obtains \lstinline[style=mystyle]|int f(void *enQ){...}| 
and \lstinline[style=mystyle]|int f(void *nv){...}| after operations \texttt{Replace} and \texttt{Delete}.
In practice, we introduce a hyperparameter called perturbation rate $\eta$, which determines the expected proportion of characters to be perturbed within a selected identifier. For each identifier, we adaptively compute a character-level perturbation budget based on its length, resulting in a total of $\eta \cdot |\mathcal{P}|$ edit operations. These operations are then applied randomly to individual characters. 
Thus, the actual perturbation path length $|\mathcal{P}|$, i.e., the number of character-level edits, is indirectly controlled by $\eta$, ensuring that perturbation strength scales with identifier length.

Subsequently, we input these smoothed samples into NCMs to obtain the corresponding predictions $y_1, y_2, \ldots, y_N$, which are then used in the following stages (\circled{3}). 

\subsection{Voting-Based Prediction Aggregation}
To mitigate the influence of incorrectly classifying AEs, \ours{} utilizes a voting-based prediction mechanism to determine the final predictive label.
Rather than predicting based only on a single sample, we generate a collection of smoothed samples and subsequently aggregate the predictions corresponding to this ensemble.
The motivation for this step is that, although one single smoothed sample may not disable an attack, a collection of perturbed samples tends to neutralize such attacks. 
Figure~\ref{fig:insight} supports empirical evidence of the insight.
The x-axis denotes the code index for AEs crafted by different attack methods, and the y-axis denotes the score.
Observe that around a crafted AE, most samples are benign samples and will be predicted as the ground truth label.


In phase (a), we obtain the corresponding predictions $y_1, y_2, \ldots, y_N$. Therefore, \ours{} further aggregates all predictions to determine $\tilde{y}$.
Below is a formal expression:
\begin{equation}
    \tilde{y} = \underset{y \in \mathcal{Y}}{\arg \max }\left[ \sum_{i=1}^{N}\mathbb{I}\{f(\vs_i)=y\} \right],
    \label{eq:formal_majority_voting}
\end{equation}
where $\mathcal{Y}$ is the set of all possible class labels.
$\mathbb{I}$ returns \textbf{1} if the condition inside the parentheses is true and \textbf{0} otherwise.
$\tilde{y}$ is the output prediction.
From the above definition, we can further formally define a smoothed NCM $g(\cdot)$ from a given base NCM $f(\cdot)$: 
\begin{equation}
    g(\vx)=\underset{y \in \mathcal{Y}}{\arg \max }\left[\underset{\vs \sim \mathcal{D}}{\mathbb{P}}(f(\vs)=y)\right],
    \label{eq:smoothed_model}
\end{equation}
where $\mathcal{D}$ is the distribution of the smoothed sample $\vs$.
To simplify, we define:
\begin{equation}
    g(\vx, y)=\underset{\vs \sim \mathcal{D}}{\mathbb{P}}(f(\vs)=y),
\end{equation}
which is the probability (``soft score'') that $f$ returns the class $y$ after perturbation.
Intuitively, $g(\vx)$ denotes the most likely class $y$ to be returned and $g(\vx, y)$ the corresponding probability.

\subsection{Certified Radius Generation}
\label{subsec:certified_radius_generation}

Based on the smoothed predictions, \ours{} generates a certified radius through two sequential steps: bound estimation (\circled{5}) and certified radius determination (\circled{6}). 

\noindent\textbf{Bound Estimation.}
We propose a practical algorithm in \ours{} to estimate lower and upper bounds (\underline{$g(\vx, y)$} and $\overline{g(\vx, y)}$), which is necessary to determine the certified radius.
Beta distribution is generally utilized to model the distribution of probabilities and to estimate the confidence intervals (upper and lower bounds) of a proportion under a confidence level.
The estimation is based on the observed number of successes and failures in a series of trials. 
In that case, we employ it to estimate the upper bound and lower bounds for the probability of success.
Specifically, $N$ copies of a given $\vx$ are crafted in the perturbation step, and then $N$ predictions are conducted.
Specifically, during the perturbation step, $N$ copies of a given $\vx$ are generated, followed by forward propagation through the model.
It can be considered as $N$ independent trials.
According to Eq~\ref{eq:formal_majority_voting}, let $c$ denote the ground truth label.
Then, we define $n_c=\sum_{1}^{N} \mathbb{I}(f(s) = c)$, meaning the number of times $f(\cdot)$ predicts $c$ among $N$ copies for a given $\vx$.
We can assume that $n_c$ follows a binomial distribution with parameters $n$ and $n_c / n$.
That's to say, ($n_{c} \sim \mathrm{~B}\left(n, n_c/n\right)$).
The Beta distribution serves as the conjugate prior to the binomial distribution, allowing us to derive the posterior distribution of the success probability and compute its lower and upper bounds using the quantiles of the Beta distribution.
Then, we have:
\begin{equation}
    \begin{split}
        &\overline{g(\vx, c)} = Beta(1-\alpha, n_c, n-n_c+1) \\
        &\underline{g(\vx, c)} = Beta(\alpha, n_c, n-n_c+1), 
    \end{split}
    \label{eq:beta_for_bounds}
\end{equation}
where $Beta(\alpha, u, v)$ means the $\alpha$-th quantile of a beta distribution with parameters $u$ and $v$.
We set $\alpha$ to 0.001 by default.
Intuitively speaking, we utilize the $\alpha$-th and $(1-\alpha)$-th quantiles of a beta distribution with parameters $n_c$ and $n-n_c+1$ to estimate the lower and upper bounds of $g(\vx, c)$.

\noindent\textbf{Certified Radius Determination.}
Recall that the certified radius indicates a theoretical bound for a given input, within which an adversary can never attack successfully. 
Through Theorem~\ref{theo:bound}, we establish an upper bound on the probability of consistent predictions within an input neighborhood, providing a theoretical basis for Theorem~\ref{theo:certification} (i.e., the consistency between $g(\vx, y)$ and $g(\vx', y)$). Theorem~\ref{theo:certification} then uses this bound to determine a certified radius, reporting certified model robustness against adversarial code snippets within this range. This probability-based approximation enables \ours{} to determine certified robustness without accessing NCM's parameters.
To report the certified robustness of an NCM as discussed in Definition~\ref{def:certified_robustness}, we first declare the distance calculation between $\vx$ and $\vx'$.
The certified radius is determined under this distance.
As adversaries can only modify the identifiers of a given code snippet $\vx$, the difference between $\vx$ and $\vx'$ only exists in identifiers.
Following Eq~\ref{eq:formal_AE}, $\vm \odot \vx$ denotes the extracted identifiers from $\vx$ (i.e., $\{\textbf{0}, iden_0, ..., \textbf{0}\}$), where $iden_0$ means the first identifier and $\textbf{0}$ the masked code token.
In that case, $\Vert\vm\Vert_0=\Vert\vm \odot \vx\Vert_0$ means the number of identifiers of $\vx$.
Let $\Vert\vx - \vx'\Vert_0$ denote the distance between $\vx$ and $\vx'$, it can be calculated as:
\begin{equation*}
    \Vert\vx - \vx'\Vert_0 = \sum_{i=0}^{|\vx|}{\Vert \vx_i - \vx'_i \Vert_0} = \sum_{i=0}^{\Vert\vm\Vert_0}{\Vert(\vm \odot \vx)_i - (\vm \odot \vx')_i\Vert_0}
    \label{eq:distance}
\end{equation*}

To simplify, we assume $\vx$ contains $N$ tokens, where $h$ identifiers exist, and each identifier is considered as an independent special token.
We randomly perturb $k$ identifiers of them.
Let $\vx \ominus \vx'$ denote the set of identifier indices at which $\vx$ and $\vx'$ differ.
In that case, $\vx \ominus \vx'$ = $|\vx \oslash \vx'|$ in Definition~\ref{def:certified_robustness} ($\vx$ and $\vx'$ only differ in identifiers).
For instance, if $\vx \oslash \vx'$ = {1, 3}, the first and third identifiers of $\vx$ and $\vx'$ are different.
To further formalize the smoothed sample $\vs$, we describe the $\vs$ generation process from the view of indices.
Consider $S$ as the set of indices $\{1, 2,..., h\}$.
Let $\mathcal{I}(h, k) \subseteq P(S)$ represent the collection of all subsets of $S$ that contain exactly $k$ distinct elements, where $P(S)$ denotes the power set of $S$. Define $\mathcal{U}(h, k)$ as the uniform distribution over $\mathcal{I}(h, k)$. Sampling from $\mathcal{U}(h, k)$ involves selecting $k$ unique indices from $S$ in such a way that each possible subset has an equal probability of being chosen. For example, a sample from $\mathcal{U}(3, 2)$ could yield the subset $ \{1, 2\}$.
We further define $\mathcal{M}:\mathcal{X}\times\mathcal{I}(h,k) \xrightarrow{} \mathcal{X}'$, where $\mathcal{X}'$ indicates a code snippet in which some identifiers are masked.
$\mathcal{M}$ takes $h$ identifiers and a set of indices as inputs, outputting the masked identifier set.
Then, we can obtain a random smoothed sample.
$\mathcal{U}(h,k)$ is the uniform distribution over $\mathcal{I}(h,k)$. 
The creation of a smoothed sample $\vs$ can be formalized as:
\begin{equation*}
    \vs = \mathcal{M}(\vx,\mathcal{H}) \quad \vx \in \mathcal{X}, \mathcal{H} \sim \mathcal{U} (h_x, k_x)
\end{equation*}
where $\vx$ is the original code snippet, $\mathcal{H}$ is the indices set that is to be perturbed selected from the uniform distribution $\mathcal{U} (h_x, k_x)$.
The $h_x$ denotes the total number of identifiers in $\vx$, while $k_x$ represents the subset of selected identifiers.

\begin{theorem}
\vspace{-4pt}
    Given $\vx$ and $\vx'$, if $|\vx \oslash \vx'| \leq r$, we have:
    \begin{center}
        $g(\vx, y)-g(\vx', y)\leq \beta \times \overline{g(x, y)}$,
    \end{center}
    where $\beta=1-\dfrac{C_{h_x-r}^{k_x}}{C_{h_x}^{k_x}}$ and $\overline{g(\vx, y)}$ the upper bound of $g(\vx, y)$.
    \label{theo:bound}
\end{theorem}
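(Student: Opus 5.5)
The plan is a coupling argument over the random index set $\mathcal{H}\sim\mathcal{U}(h_x,k_x)$, in the style of randomized-masking certificates such as RanMASK. The key observation is that the smoothed sample $\vs=\mathcal{M}(\vx,\mathcal{H})$ depends on $\vx$ only through the identifiers whose indices lie outside $\mathcal{H}$. The identifiers in $\mathcal{H}$ are perturbed (masked) and their original values are destroyed. Take any $\mathcal{H}$ that contains every index in $\vx\oslash\vx'$. Then $\mathcal{M}(\vx,\mathcal{H})$ and $\mathcal{M}(\vx',\mathcal{H})$ have the same distribution, because the randomness of the character-level operations is applied identically to the selected identifiers. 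I will treat this as the modelling assumption implicit in the definition of $\mathcal{M}$: a selected identifier is replaced by a token whose law does not depend on its original value. Throughout, $h_x=h_{x'}$ and $k_x=k_{x'}$, since the adversary only renames identifiers.

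First, fix the event $E=\{\vx\oslash\vx'\subseteq\mathcal{H}\}$ and decompose both soft scores by the law of total probability:
\begin{equation*}
g(\vx,y)=\mathbb{P}(f(\vs)=y,E)+\mathbb{P}(f(\vs)=y,E^c),
\end{equation*}
and similarly for $\vx'$ with $\vs'=\mathcal{M}(\vx',\mathcal{H})$. On $E$ the two terms coincide by the coupling observation, so they cancel in the difference. This gives
\begin{equation*}
g(\vx,y)-g(\vx',y)\le \mathbb{P}(f(\vs)=y,E^c)=\mathbb{P}(E^c)\,\mathbb{P}(f(\vs)=y\mid E^c),
\end{equation*}
where we drop the nonnegative term coming from $\vx'$.

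Second, compute $\mathbb{P}(E^c)$ by counting, and set $|\vx\oslash\vx'|=r'\le r$. The quantity the paper calls $1-C_{h_x-r}^{k_x}/C_{h_x}^{k_x}$ is the probability that $\mathcal{H}$ intersects a fixed $r$-set. So I read the statement's convention this way: $\mathcal{H}$ is the set of \emph{retained}, unperturbed indices. Equivalently, the relevant event is $\mathcal{H}\cap(\vx\oslash\vx')=\emptyset$. I will redo the decomposition with $E$ defined as this disjointness event, swapping the roles of kept and perturbed indices. Then $\mathbb{P}(E)=C_{h_x-r'}^{k_x}/C_{h_x}^{k_x}$, and this probability is decreasing in $r'$. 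Hence $\mathbb{P}(E^c)\le\beta$.

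Third, bound the conditional factor. The crude bound $\mathbb{P}(f(\vs)=y,E^c)\le\mathbb{P}(E^c)$ would give only $\beta$. The statement wants $\beta\cdot\overline{g(\vx,y)}$, so I need $\mathbb{P}(f(\vs)=y\mid E^c)\lesssim g(\vx,y)\le\overline{g(\vx,y)}$. This is the main obstacle, because conditioning on $E^c$ can in principle raise the probability of class $y$. My first attempt is to argue as RanMASK does: under the sampling scheme, the event $E^c$ is approximately independent of the prediction. Failing that, I would follow the paper's practical framing and use the Beta upper confidence bound as a proxy. That bound holds with probability $1-\alpha$, and I would substitute it for the conditional probability as an estimated quantity. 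I would state explicitly that the final inequality is certified only up to this estimation step and the conditional-independence heuristic. The rigorous fallback is the weaker bound $g(\vx,y)-g(\vx',y)\le\beta$, which already suffices for a radius statement.
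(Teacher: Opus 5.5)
You follow the paper's proof step for step: you split on $E^c=\{\mathcal{H}\cap(\vx\oslash\vx')\neq\emptyset\}$, the two terms cancel on the disjointness event, you drop the nonnegative $\vx'$ term, and you count $\mathbb{P}(E^c)=1-C_{h_x-r'}^{k_x}/C_{h_x}^{k_x}\le\beta$, with $\mathcal{H}$ the retained indices, as you eventually chose.

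The step you flag as the main obstacle is exactly where the paper's argument fails. The paper passes from $\mathbb{P}([f(\vs)=y]\wedge E^c)$ to $\mathbb{P}(f(\vs)=y)\cdot\mathbb{P}(E^c)$ in one line. That step needs the prediction event and $E^c$ to be independent or negatively correlated, and nothing guarantees this. So your proposal does not establish the stated inequality, but no proof can, because the inequality is false even in the idealized masking model.

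Here is a counterexample. Take $h_x=2$, $k_x=1$, $r=1$, so $\beta=1-C_{1}^{1}/C_{2}^{1}=\tfrac12$. Let $\vx,\vx'$ differ only in identifier $1$, and let $f$ output $y$ exactly when identifier $1$ is retained with its value in $\vx$. Then $g(\vx,y)=\tfrac12$ and $g(\vx',y)=0$, so the left side is $\tfrac12$. The right side is $\beta\,\overline{g(\vx,y)}<\tfrac12$ for any upper bound below $1$; with the exact value it is $\tfrac14$. More generally, suppose $|\vx\oslash\vx'|=r$ and $f$ predicts $y$ iff some index of $\vx\oslash\vx'$ is retained with its $\vx$-value. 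Then the prediction event lies inside $E^c$, and the left side equals $\beta$ exactly.

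For the same reason, neither of your heuristic patches can be made rigorous. The Beta quantile is a confidence bound for the unconditional $g(\vx,y)$. The conditional probability $\mathbb{P}(f(\vs)=y\mid E^c)$ equals $1$ in the example above.

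What the argument genuinely yields is your fallback, slightly sharpened to $g(\vx,y)-g(\vx',y)\le\min\{\beta,\,g(\vx,y)\}$. Plugged into Theorem~\ref{theo:certification}, this changes the certification condition to $\underline{g(\vx,y)}-\beta>0.5$. That condition certifies radii no larger than the paper's.

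The two other caveats you raise are also real gaps that the paper passes over:
\begin{itemize}
\item The identity $\mathcal{M}(\vx,\mathcal{H})=\mathcal{M}(\vx',\mathcal{H})$ on the disjointness event requires perturbed identifiers to carry no information about their original values. The actual Insert/Replace/Delete edits at rate $\eta$ violate this.
\item The bound $g(\vx,y)\le\overline{g(\vx,y)}$ holds only with probability $1-\alpha$. Even a corrected version of the statement is therefore probabilistic unless $\overline{g(\vx,y)}$ is read as a deterministic bound.
\end{itemize}
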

\begin{proof}
Recall that $\mathcal{H}\sim{U}(h_x,k_x)$ and :
\begin{align*}
    g(\vx, y)&= \mathbb{P}(f(\vs)=y) = \mathbb{P}(f(\mathcal{M}(\vx,\mathcal{H}))=y),
    \\
    g(\vx', y)&=\mathbb{P}(f(\vs')=y) = \mathbb{P}(f(\mathcal{M}(\vx',\mathcal{H}))=y)
\end{align*}
Based on the law of total probability:
\begin{align*}
    g(\vx, y) &= \mathbb{P}([f(\mathcal{M}(\vx,\mathcal{H}))=c] \land [\mathcal{H}\cap (\vx\oslash\vx')\neq\emptyset]) \\
    &\quad + \mathbb{P}([f(\mathcal{M}(\vx,\mathcal{H}))=c]\land[\mathcal{H}\cap (\vx\oslash\vx')=\emptyset]), \\
    g(\vx', y) &= \mathbb{P}([f(\mathcal{M}(\vx',\mathcal{H}))=c]\land[\mathcal{H}\cap (\vx\oslash\vx')\neq\emptyset]) \\
    &\quad + \mathbb{P}([f(\mathcal{M}(\vx',\mathcal{H}))=c]\land[\mathcal{H}\cap (\vx\oslash\vx')=\emptyset])
\end{align*}
When $\mathcal{H}\cap (\vx\oslash\vx')=\emptyset$, we have $\mathcal{M}(x,\mathcal{H})=\mathcal{M}(x',\mathcal{H})$.
Intuitively, $\mathcal{H}$  represents the indices of identifiers that are retained (i.e., not masked), while $\vx\oslash\vx'$ denotes the indices at which $\vx$ and $\vx'$ differ.
If their intersection is empty, it implies that all retained identifiers are identical, and all differing identifiers between $\vx$ and $\vx'$ are masked.
Then,

\begin{equation}
    \begin{aligned}
    \mathbb{P}([f(\mathcal{M}(\vx,\mathcal{H}))=c]\land[\mathcal{H}\cap (\vx\oslash\vx')=\emptyset]) \\
    =\mathbb{P}([f(\mathcal{M}(\vx',\mathcal{H}))=c]\land[\mathcal{H}\cap (\vx\oslash\vx')=\emptyset])
    \end{aligned}
\end{equation}

Therefore, 
\begin{align}
    & g(\vx, y) - g(\vx', y) \notag\\
    = & \quad \mathbb{P}([f(\mathcal{M}(\vx,\mathcal{H}))=c]\land[\mathcal{H}\cap (\vx\oslash\vx')\neq\emptyset])\notag\\ 
    &- \mathbb{P}([f(\mathcal{M}(\vx',\mathcal{H}))=c]\land[\mathcal{H}\cap (\vx\oslash\vx')\neq\emptyset]) \notag\\
    \leq & \quad \mathbb{P}([f(\mathcal{M}(\vx,\mathcal{H}))=c]\land[\mathcal{H}\cap (\vx\oslash\vx')\neq\emptyset]) \notag\\
    \leq & \quad \mathbb{P}(f(\mathcal{M}(\vx,\mathcal{H}))=c) \times \mathbb{P}(\mathcal{H}\cap(\vx\oslash\vx')\neq\emptyset) \notag\\
    = & \quad g(\vx, y) \times \mathbb{P}(\mathcal{H}\cap(\vx\oslash\vx')\neq\emptyset) \notag\\
    \leq & \quad \overline{g(\vx, y)} \times \mathbb{P}(\mathcal{H}\cap(\vx\oslash\vx')\neq\emptyset)
    \notag\\
    \label{eq:subtraction}
\end{align}

By using combinatorial methods, we calculate that 
\begin{align}
    \centering
    \mathbb{P}(\mathcal{H}\cap(\vx\oslash\vx')\neq\emptyset)
    &= \quad 1-\mathbb{P}(\mathcal{H}\cap(\vx\oslash\vx')=\emptyset)
    \notag\\
    &= \quad 1 - \dfrac{C_{h_x - |\vx\oslash\vx'|}^{k_x}}{C_{h_x}^{k_x}}
    \notag\\
    &\leq \quad 1 -  \dfrac{C_{h_x-r}^{k_x}}{C_{h_x}^{k_x}} = \beta
    \notag\\
    \label{eq:combination}
\end{align}
Combining Eq~\ref{eq:subtraction} and Eq~\ref{eq:combination}, we have $g(\vx, y)-g(\vx', y)\leq \beta \times \overline{g(\vx, y)}$, thus bounding the prediction under perturbation.
\end{proof}

\begin{theorem}
    Given $\vx$ and $\vx'$, if $\Vert\vx\oslash\vx'\Vert\leq r$ and $\underline{g(\vx, y)}-\beta\times\overline{g(\vx, y)} > 0.5$, with probability at least $(1-\alpha)$, we have:
    $g(\vx') = y$.
    \label{theo:certification}
\end{theorem}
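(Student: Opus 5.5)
The plan is to combine Theorem~\ref{theo:bound} with the confidence guarantee of the Beta (Clopper--Pearson) interval in Eq~\ref{eq:beta_for_bounds}, and then use the elementary fact that a class with probability above $0.5$ is the unique maximizer in Eq~\ref{eq:smoothed_model}.

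\textbf{Step 1: the probabilistic event.} Define the event $E$ on which the true soft score is sandwiched, $\underline{g(\vx, y)} \le g(\vx, y) \le \overline{g(\vx, y)}$. The counts $n_c$ arise from $N$ independent draws $\vs_i \sim \mathcal{D}$, so the Beta quantile bounds are exact confidence bounds for the binomial proportion. Strictly, the two one-sided bounds together hold with probability at least $1-2\alpha$. To claim $1-\alpha$ as stated, I would either interpret $\alpha$ as the total level (using $\alpha/2$ per side), or note the following. The upper bound enters only through $\beta\,\overline{g(\vx,y)}$, and $\overline{g(\vx,y)}$ may be replaced by the trivial bound $1$ or by $g(\vx,y)$ itself, since the proof of Theorem~\ref{theo:bound} actually shows $g(\vx,y)-g(\vx',y)\le \beta\, g(\vx,y)$. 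Under that reading only the lower bound needs to hold, which happens with probability at least $1-\alpha$. I regard this accounting of the confidence level as the main delicate point, and I would state explicitly which reading is used.

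\textbf{Step 2: deterministic argument on $E$.} Assume $|\vx \oslash \vx'| \le r$. Theorem~\ref{theo:bound} gives
\[
g(\vx', y) \ge g(\vx, y) - \beta\,\overline{g(\vx, y)}.
\]
On $E$ we have $g(\vx,y) \ge \underline{g(\vx,y)}$, so
\[
g(\vx', y) \ge \underline{g(\vx, y)} - \beta\,\overline{g(\vx, y)} > 0.5.
\]

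\textbf{Step 3: conclude.} The values $\{g(\vx', y')\}_{y' \in \mathcal{Y}}$ are probabilities of disjoint events summing to $1$. Hence every $y' \neq y$ satisfies $g(\vx', y') \le 1 - g(\vx', y) < 0.5 < g(\vx', y)$. Therefore $y$ is the unique $\arg\max$ in Eq~\ref{eq:smoothed_model}, that is, $g(\vx') = y$.

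Since $E$ has probability at least $1-\alpha$ (under the convention from Step 1), the claim holds with probability at least $1-\alpha$. The randomness is only over the Monte Carlo sampling used to form the bounds; $\vx'$ ranges over all snippets with $|\vx\oslash\vx'|\le r$, and the guarantee is uniform in $\vx'$ because Step 2 is deterministic given $E$.
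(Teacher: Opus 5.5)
Your proof is correct and is essentially the paper's: on the $(1-\alpha)$ event $\underline{g(\vx,y)}\le g(\vx,y)$ you combine Theorem~\ref{theo:bound} to get $g(\vx',y)\ge\underline{g(\vx,y)}-\beta\,\overline{g(\vx,y)}>0.5$ and read off $g(\vx')=y$ from Eq~\ref{eq:smoothed_model}. You add two things the paper does not: an explicit argument that a class with probability above $0.5$ is the $\arg\max$, and a more careful accounting of the confidence level, which the paper simply asserts as $1-\alpha$ even though its chain also uses $\overline{g(\vx,y)}$.

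One correction to your Step~1 aside. Replacing $\overline{g(\vx,y)}$ by $g(\vx,y)$ does recover $1-\alpha$ under the stated hypothesis, since $(1-\beta)\,g(\vx,y)\ge(1-\beta)\,\underline{g(\vx,y)}\ge\underline{g(\vx,y)}-\beta\,\overline{g(\vx,y)}$ using $\underline{g(\vx,y)}\le\overline{g(\vx,y)}$. Replacing it by the trivial bound $1$, however, only yields $g(\vx',y)\ge\underline{g(\vx,y)}-\beta$, which would require the stronger hypothesis $\underline{g(\vx,y)}-\beta>0.5$.
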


\begin{proof}
    With probability at least $(1-\alpha)$, we have:
    \begin{equation}
        0.5 < \underline{g(\vx, y)} - \beta\times\overline{g(\vx, y)} 
        \leq g(\vx, y) - \beta\times\overline{g(\vx, y)} 
        \leq g(\vx', y)
    \end{equation}
    where the last inequality is from Theorem~\ref{theo:bound}, and $g(\vx')=y$ from its definition given in Eq~\ref{eq:smoothed_model}.
    If $y=c$ (the ground truth label from $\vx$) and $\underline{g(\vx, y)}-\beta\times\overline{g(\vx, y)} > 0.5$, the NCM $g(\cdot)$ is ensured certified robustness around $\vx$.
\end{proof}

The upper and lower bound of $g(\vx, y)$ are estimated previously, and $\beta$ can be calculated by Eq~\ref{eq:beta_for_bounds}.
From Theorem~\ref{theo:certification}, we can report the final certified robustness by a step-by-step practical algorithm to approximate the certified radius.


\begin{algorithm}[t]
    \small
    \caption{Practical Algorithm to Determine Certified Radius}
    \label{alg:practical_algorithm}
    \begin{algorithmic}[1]
        \REQUIRE smoothed predictions $y_1, y_2, \ldots, y_N$, original code snippet $\vx$, ground truth label $c$, the target NCM $g(\cdot)$
        \ENSURE certified robustness radius $r$
        \STATE $\tilde{y} \gets$ get the final prediction from the majority voting.
        \IF{$\tilde{y} \neq c$}
            \RETURN $ABSTAIN$
        \ELSE
             \STATE $\underline{g(\vx, y)},  \overline{g(\vx, y)} \gets$ get the lower/upper bound by Bound Estimation through Eq~\ref{eq:beta_for_bounds}
             \STATE $h_x \gets$ get the number of identifiers in $\vx$
             \FOR{$r$ \textbf{in} $0, \ldots, h_x$} 
                \STATE $\beta \gets$ get $\beta$ value through Eq~\ref{eq:combination}
                \IF{$\underline{g(\vx, y)}-\beta\times\overline{g(\vx, y)} > 0.5$}
                \STATE $r \gets r+1$
                \ELSE
                \STATE break
                \ENDIF
            \ENDFOR
        \ENDIF
    \RETURN $r$ 
  \end{algorithmic}
\end{algorithm}

\noindent\textbf{Practical algorithm.}
Recall that \ours{} operates at the inference stage of NCMs, requiring only model feedback without access to internal model information.
Algorithm~\ref{alg:practical_algorithm} illustrates the practical algorithm to determine the certified radius.
Given the smoothed predictions $y_1, y_2, \ldots, y_N$ and the target NCM $g(\cdot)$, we first determine the final prediction $\tilde{y}$ and assess its correctness (lines 1-3).
Based on Eq~\ref{eq:beta_for_bounds}, then we estimate the lower and upper bounds of $g(\vx, y)$ (line 5).
We incrementally increase $r$ (the number of identifiers to be perturbed) from 0 by 1 (line 7) and calculate $\beta$ using Eq~\ref{eq:combination} (line 8).
This process will continue until $\underline{g(\vx, y)} - \beta \times \overline{g(\vx, y)} \leq 0.5$ (Theorem~\ref{theo:certification}) (line 13).
Upon termination, \ours{} returns $r$ as the certified robustness for $\vx$ (line 16).
Thus, we can report that $g(\vx')$ will return $c$ for any AE $\vx'$ if $|\vx\oslash\vx'| \leq r$ with $(1-\alpha)$ confidence.

\section{Evaluation}
\label{sec:evaluation}

To ensure a thorough evaluation, we conduct comprehensive experiments assessing \ours{} across four distinct aspects: robustness, generalization, time cost, and ablation studies focusing on core components and hyper-parameters. 
We complete the evaluation by answering the following research questions:
\begin{itemize}[leftmargin=25pt]
    \item[\textbf{RQ1.}]\textbf{How robust is \ours{}, from views of empirical and theoretical perspective?}
    \item[\textbf{RQ2.}]\textbf{How efficient is \ours{}?}
    \item[\textbf{RQ3.}]\textbf{How effectively does \ours{} demonstrate generalization across architectures and tasks?}
    \item[\textbf{RQ4.}]\textbf{How do different parameters (including the smoothed sample number $N$ and perturbation rate $\eta$) affect the performance of \ours{}?}
\end{itemize}

\subsection{Experimental Setups}
\label{subsec:setup}
\noindent\textbf{Code understanding tasks and dataset usage.}
Three code understanding tasks and their counterpart datasets are included in our experiments, including defect detection~\cite{zhou2019devign,zhang2022towards}, clone detection~\cite{svajlenko2014towards}, and function classification~\cite{2016-Convolutional-Neural-Networks-over-Tree-Structures}.
Detailed descriptions of the datasets are summarized in our GitHub~\cite{2024-Our-Artifacts}.
\noindent\textbf{Code understanding models usage.}
Four different NCM architectures, CodeBERT~\cite{feng2020codebert}, GraphCodeBERT~\cite{guo2020graphcodebert}, and two code LLMs, CodeLlama~\cite{roziere2023code} and StarCoder~\cite{lozhkov2024starcoder}, are considered. 
Details are available at Github~\cite{2024-Our-Artifacts}.

\subsection{Metrics}
\label{subsec:metrics_learning}

In our experiment, following existing adversarial techniques~\cite{2020-MHM,2022-ALERT}, we utilize \textit{Accuracy (Acc)} to evaluate the effectiveness, \textit{Attack Successful Rate (ASR)} for empirical robustness, respectively.
Given a specific code understanding dataset $\mathcal{X}=\{(\vx_1, c_1), (\vx_2, c_2), \dots, (\vx_n, c_n)\}$, $\vx_i$ denotes a code snippet and $c_i$ is the ground-truth label of $\vx_i$. 
Let $f$ denote the NCM and $f(\vx)$ denote NCM's prediction of code snippet $\vx$.
\textit{ACC} measures the percentage of code snippets that are classified by $f(\cdot)$ correctly and is computed as follows:
\begin{equation}
    ACC = \dfrac{1}{|\mathcal{X}|}\sum_{i=1}^{|\mathcal{X}|}{\mathbb{I}\{f(\vx_i)=c_i\}}, \quad \vx_i \in \mathcal{X}
    \label{eq:ACC}
\end{equation}
where $\mathbb{I}$ returns \textbf{1} when $f(\vx_i)$ = $c_i$ and \textbf{0} otherwise.

Let $\mathcal{X}^*$ denote the corresponding AE dataset based on $\mathcal{X}$.
$\mathcal{X}^* = \{(\vx'_1, c_1), (\vx'_2, c_2), \dots, (\vx'_n, c_n)\}$, $\vx'_i$ is the counterpart AE of original sample $\vx_i$ and $c_i$ the ground-truth label.
ASR measures the percentage of AEs that are misclassified and is defined as follows:
\begin{equation}
    ASR = \dfrac{1}{|\mathcal{X^*}|}\sum_{i=1}^{|\mathcal{X}^*|}{\mathbb{I}\{f(\vx'_i)\neq c_i\}}, \quad \vx'_i \in \mathcal{X^*}
    \label{eq:ASR}
\end{equation}
where $\mathbb{I}$ returns \textbf{1} when $f(\vx_i) \neq c_i$ and \textbf{0} otherwise.


However, ASR is limited to empirical assessments of robustness. To evaluate robustness from a theoretical perspective, we introduce a new metric: the \textit{Normalized Certified Robustness Radius (NCRR)}.
Given an original sample $\vx$, the maximum certified robustness radius denotes the minimal distance from $\vx$ to the decision boundary.
Recall that the distance in a programming language is defined by identifiers, as shown in Eq~\ref{eq:distance}.
Based on the calculated $r$ by Algorithm~\ref{alg:practical_algorithm}, we then define $NCRR$ as follows:
\begin{equation}
    NCRR = \dfrac{1}{|\mathcal{X}|}\sum_{i=1}^{|\mathcal{X}|}{\dfrac{r_x}{h_x}}, \quad
    \vx \in \mathcal{X}
    \label{eq:NCRR}
\end{equation}
where $r_x$ is the certified robustness result returned by \ours{} and $h_x$ denotes the number of identifiers in $\vx$.
The radius $r_x$ is normalized sample-wisely by $\frac{r_x}{h_x}$.
NCRR indicates the average safe radius that can be perturbed for each sample.
Ideally, the NCRR equals \textbf{1}, indicating that adversaries can never find an AE that attacks successfully, even if they manipulate all identifiers of a given code snippet.


\begin{figure*}[t]
\centering
 \subfloat[Distribution of \#Indentifiers]
    {
        \includegraphics[width=0.31\linewidth]{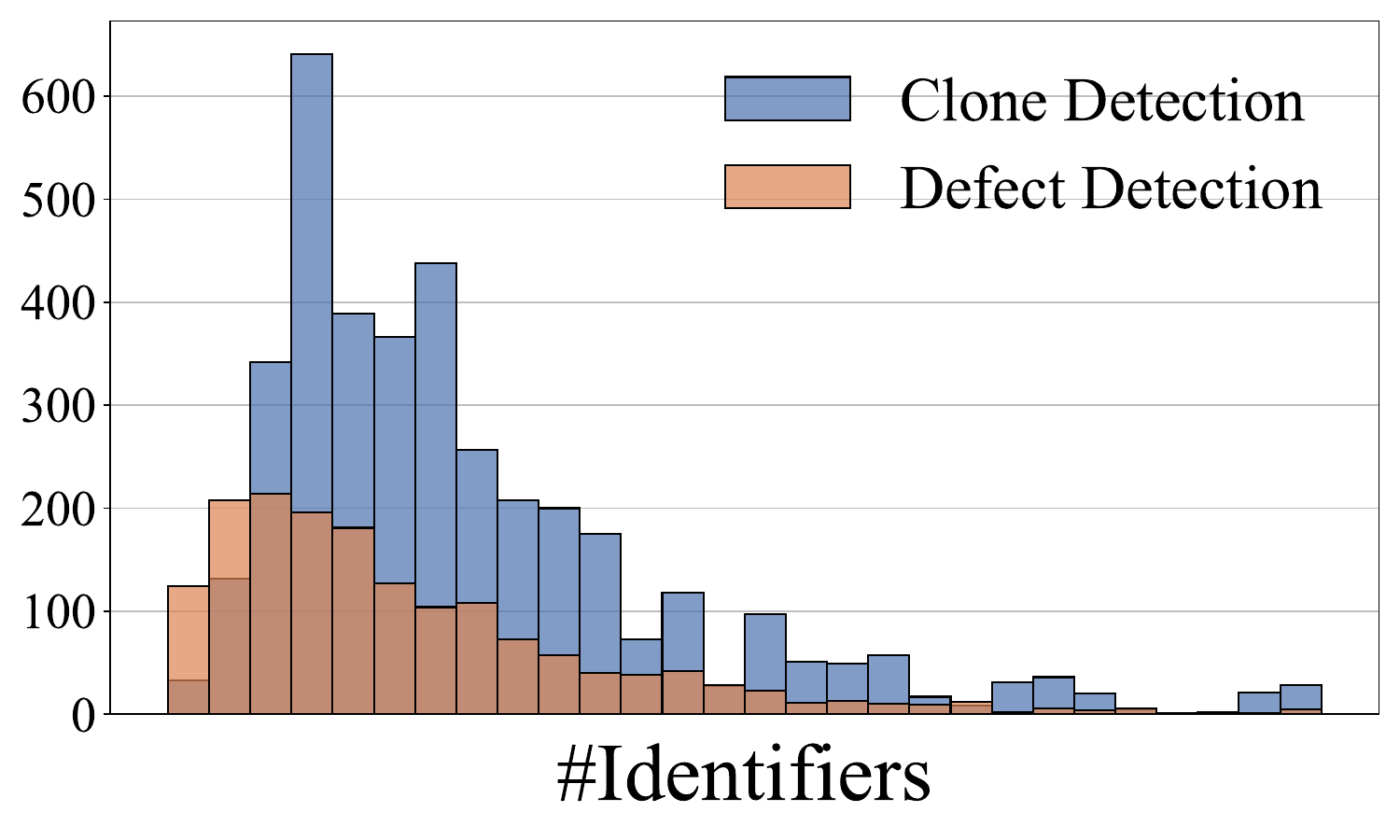}
        \label{fig:distribution}
    }
    \subfloat[Clone Detection (NCRR=0.31)]
    {
        \includegraphics[width=0.31\linewidth]{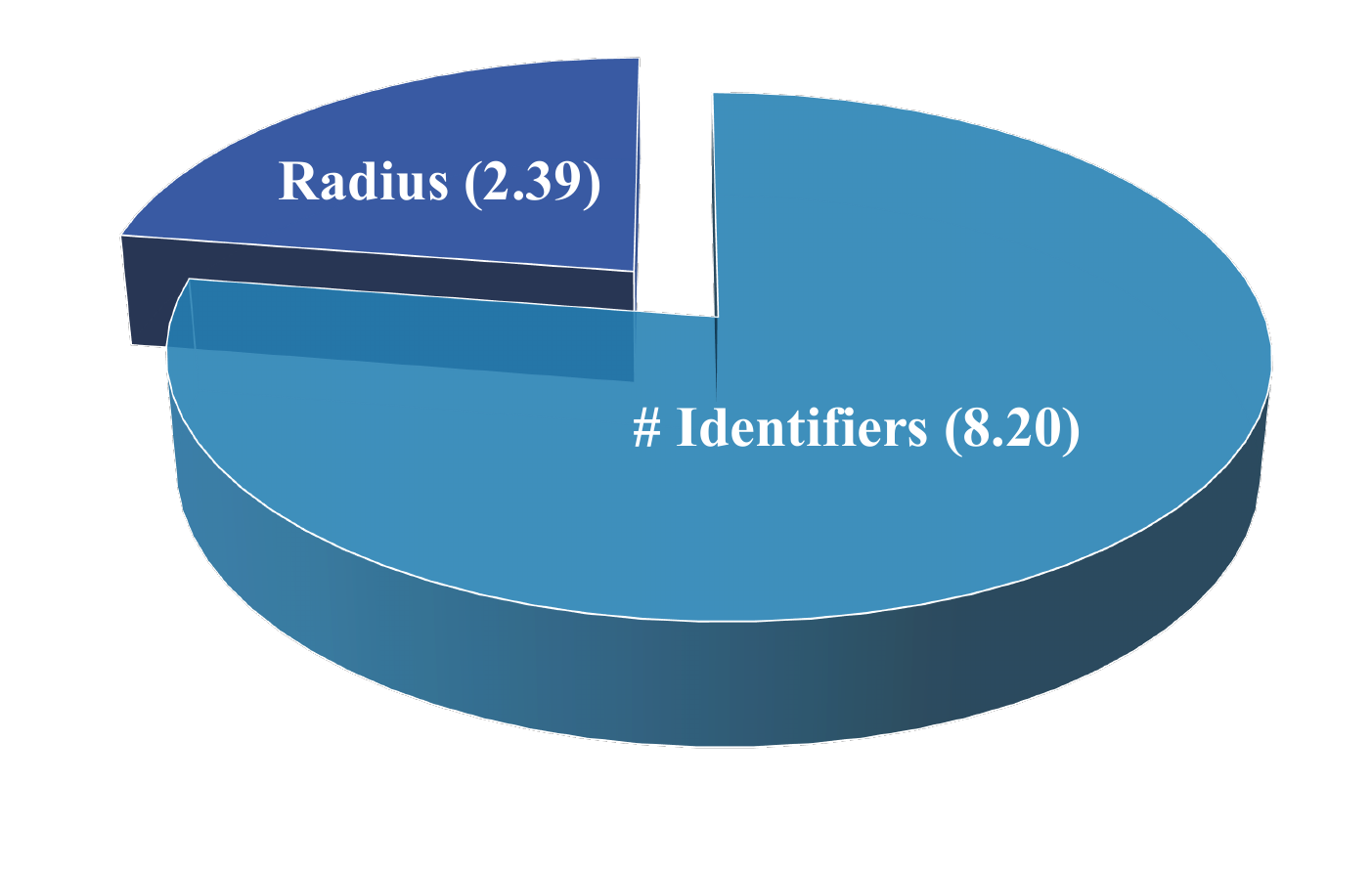}
    }
    \subfloat[Defect Detection (NCRR=0.29)]
    {
        \includegraphics[width=0.31\linewidth]{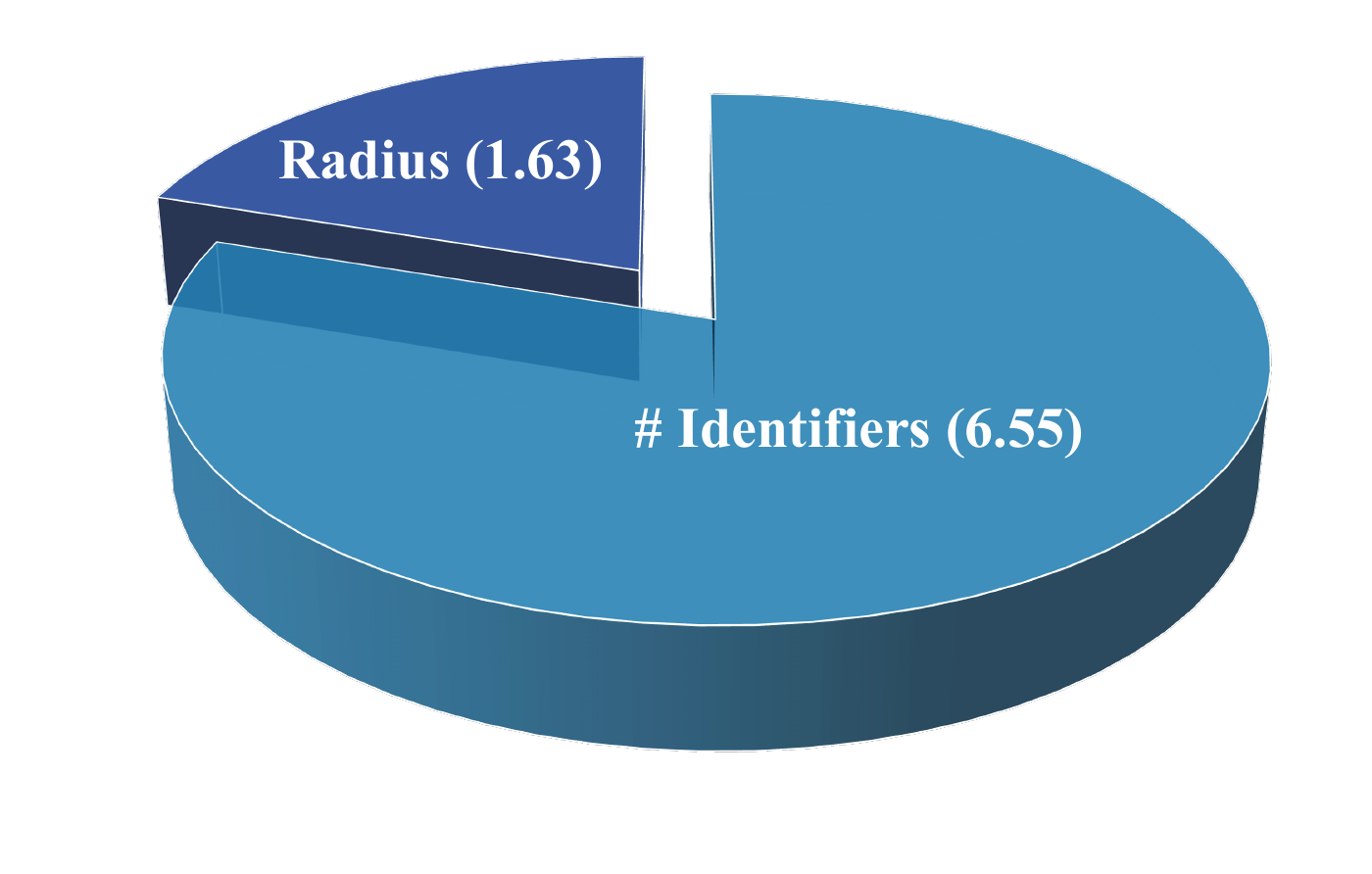}
    }
    \caption{The certified robustness of \ours{}. 
    \#Identifiers mean the number of identifiers for a given code snippet.
    Figure~\ref{fig:certified_robustness}(a) illustrates the distribution of \#Identifiers. Figure~\ref{fig:certified_robustness}(b)(c) report the certified robustness measured by Radius$\uparrow$ and NCRR$\uparrow$.}
    \label{fig:certified_robustness}
    \vspace{-8pt}
\end{figure*}
\begin{table*}[t]
    \centering
    \footnotesize
    \tabcolsep=5pt
    \caption{The empirical robustness (in percentage \%) by different white (\emptycirc) / black (\fullcirc) box defense techniques. Column ``Attack'' denotes different techniques used for constructing AEs. Column ``Victim Model'' denotes the original model without any robustness argumentation techniques.
    ACC is utilized to measure NCM effectiveness and ASR for robustness. ACC$\uparrow$, ASR$\downarrow$.
    }
    \label{tab:main_evaluation}
    \vspace{-6pt}

\begin{tabular}{cccccccccccc}
    \toprule
    Task & Attack  & \multicolumn{2}{c}{Victim Model} & \multicolumn{2}{c}{RoPGen} & \multicolumn{2}{c}{SPACE} & \multicolumn{2}{c}{\ours{} [$N$=100]} &\multicolumn{2}{c}{\ours{} [$N$=1000]}\\
    \midrule
    \multicolumn{2}{c}{White/Black Box} &\multicolumn{2}{c}{-} & \multicolumn{2}{c}{\emptycirc}  & \multicolumn{2}{c}{\emptycirc} &\multicolumn{2}{c}{\fullcirc} &\multicolumn{2}{c}{\fullcirc}\\
    \midrule
    \multicolumn{2}{c}{Metrics} & ACC & ASR & ACC & ASR & ACC & ASR & ACC & ASR & ACC & ASR\\ 
    \midrule
    \multirow{4}{*}{Defect Detection} 
    & MHM & 61.86 & 36.12 & 63.40 & 20.50 & 64.23 & 23.40 & 62.15 & 8.80 &62.18 &8.38\\
    & ALERT & 61.86 & 46.80 & 63.40 & 21.47 & 64.23 & 29.85 & 62.15 & 8.74 &62.18 &8.62\\
    & CODA& 61.86 & 44.39 & 63.40 & 23.34 & 64.23 & 28.64 & 62.15 & 11.70 &62.18 &13.32\\
    & Average & 61.86 & 42.43 & 63.40 & 21.77 & 64.23 & 27.29 & 62.15 & 9.74 &62.18 &10.10\\ 
    \midrule
    \multirow{4}{*}{Clone Detection}  
    & MHM & 96.72 & 8.71 & 97.30 & 2.19 & 97.05 & 5.76 & 96.72 & 0.95 &96.62 &1.08\\
    & ALERT & 96.72 & 17.73 & 97.30 & 3.92 & 97.05 & 10.64 & 96.72 & 1.36 &96.62 &1.42\\
    & CODA & 96.72 & 38.10 & 97.30 & 10.51 & 97.05 & 21.03 & 96.72 & 1.68 &96.62 &1.52\\
    & Average & 96.72 & 21.51 & 97.30 & 3.52 & 97.05 & 12.47 & 96.72 & 1.33 &96.62 &1.34\\ 
    \bottomrule
    \end{tabular}
    \vspace{-3mm}
\end{table*}
\begin{table}[t]
    \footnotesize
        \tabcolsep=6pt
        \centering
        \caption{The efficiency of \ours{} and other methods.  
        The $\pm$ denotes the standard deviation calculated over 1000 repeated inferences on the test set.
        }
        \vspace{-6pt}
        \label{tab:efficiency}
        \begin{tabular}{cccc}
        \toprule
        Task & Method & \begin{tabular}[c]{@{}c@{}}Preparation\\ time cost (s)\end{tabular} & \begin{tabular}[c]{@{}c@{}}Inference\\ time cost (s / sample)\end{tabular} \\
        \midrule
        \multirow{3}{*}{Defect detection} & SPACE & 2,240 & 0.015 $\pm$ 0.039 \\
        & RoPGen & 141,010 & 0.010 $\pm$ 0.034 \\
        & \ours{} & 1,328 & 0.724 $\pm$ 0.122 \\
        \midrule
        \multirow{3}{*}{Clone detection} & SPACE & 10,268 & 0.016 $\pm$ 0.037 \\
        & RoPGen & 342,545 & 0.014 $\pm$ 0.035 \\
        & \ours{} & 1,365 & 1.535 $\pm$ 0.062 \\
        \bottomrule
    \end{tabular}
    \vspace{-14pt}
\end{table}

To summarize, \textit{ACC} measures effectiveness, with higher values indicating greater effectiveness. 
\textit{ASR} measures empirical robustness; lower \textit{ASR} indicates more robust NCMs are.

\subsection{RQ1. The robustness of \ours{}.}
\label{subsec:experimental_results}
\noindent\textbf{How robust is \ours{} empirically?}
\newline
\noindent\textit{Design.}
We evaluate \ours{} on two code understanding tasks: defect detection and clone detection. Table~\ref{tab:main_evaluation} presents results against three adversarial attacks: MHM~\cite{2020-MHM}, ALERT~\cite{2022-ALERT}, and CODA~\cite{2023-CODA}. 
Note that in addition to identifier-targeted attacks (MHM and ALERT), structure-targeted attack CODA is also considered.
Column ``Victim Model'' denotes undefended NCMs, while subsequent columns denote compared current defenses, including RoPGen~\cite{2022-ropgen}, SPACE~\cite{2022-SPACE}, and \ours{} with $N$=100 and $N$=1000 smoothed samples.
All models are evaluated on the same datasets~\cite{2024-Our-Artifacts} and model CodeBERT~\cite{feng2020codebert}. For a fair comparison, we follow RoPGen and SPACE’s original post-training settings. 

\textit{Results and analysis.}
For defect detection, \ours{}[N=100] reduces the average ASR from 42.43\% (Victim Model) to 9.74\%, significantly outperforming RoPGen (21.77\%) and SPACE (27.29\%). Similarly, for clone detection, \ours{} achieves an ASR between 0.95\% and 1.68\%, outperforming RoPGen (3.52\%) and SPACE (12.47\%), while preserving high ACC (>96\%).
While RoPGen and SPACE show marginally higher ACC, their improvements stem from extensive post-training modifications. In contrast, \ours{} operates efficiently at inference time without training phase, making it more practical and adaptable for real-world applications.
Observe that \ours{}[$N$=1000] only achieves results comparable to \ours{}[$N$=100], indicating diminishing returns with increased smoothing samples. Given the trade-off between efficiency and robustness, we adopt $N=100$ as the default configuration.
We further analyze the impact of the parameter $N$ in RQ4 through a detailed ablation study.


In summary, \ours{} consistently achieves the best empirical robustness (lowest ASR) across all tasks and attacks on average while maintaining high effectiveness (ACC), highlighting its superior performance compared to baselines.

\noindent\textbf{How robust is \ours{} theoretically?}
\newline
\noindent\textit{Design.}
We evaluate the certified robustness of \ours{} by computing the certified radius using Algorithm~\ref{alg:practical_algorithm}, which guarantees that no adversarial example can succeed within the perturbation bound $r$ (as defined in Definition~\ref{def:certified_robustness}). The metric ``Radius'' denotes the average number of certified identifiers per task, while ``NCRR'' (Eq.~\ref{eq:NCRR}) represents the normalized percentage of these identifiers. All calculations are conducted on test set for each code understanding task.

\textit{Results and analysis.}
Figure~\ref{fig:certified_robustness} illustrates the reported certified robustness measured by ``Radius'' and ``NCRR''.
``\#Identifiers'' means the average number of identifiers contained in each code snippet.
\ours{} achieves an average radius of 1.63 for defect detection, with each sample containing an average of 6.55 identifiers and an NCRR of 29\%, indicating a substantial proportion of robustly certified identifiers. For clone detection, the average radius is 2.39, with an average of 8.20 identifiers per sample and an NCRR of 31\%, demonstrating \ours{}'s strong robustness boundary against adversarial examples. 

In summary, the results show that \ours{} consistently achieves strong certified robustness across all tasks. A substantial proportion of identifiers are provably robust, highlighting the effectiveness of \ours{} in defending adversarial attacks within the certified perturbation bound.

\begin{figure*}[!t]
    \centering
    \begin{minipage}{0.6\textwidth}
        \includegraphics[width=\linewidth]{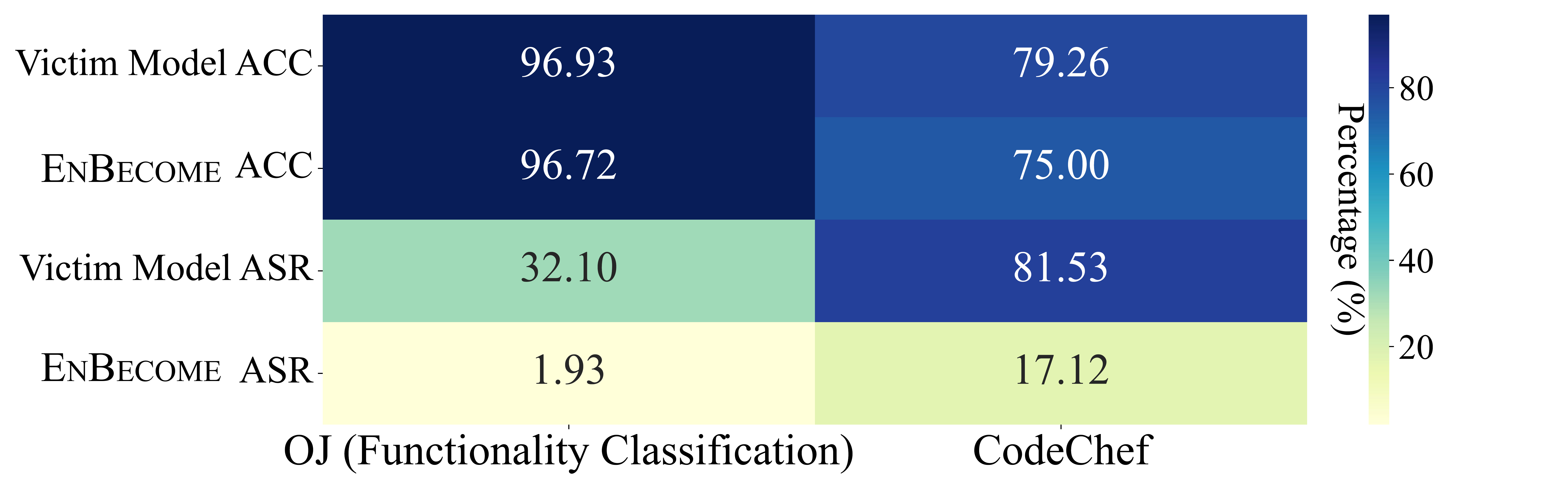} 
        \vspace{-4pt}
        \caption{\ours{} on other datasets, OJ (Functionality Classification) and CodeChef. Darker shades represent higher percentages, with notable differences between the NCMs' ASR on each dataset.}
        \label{fig:gen_other_dataset}
    \end{minipage}
    \hfill
    \begin{minipage}{0.37\textwidth}
        \includegraphics[width=\linewidth]{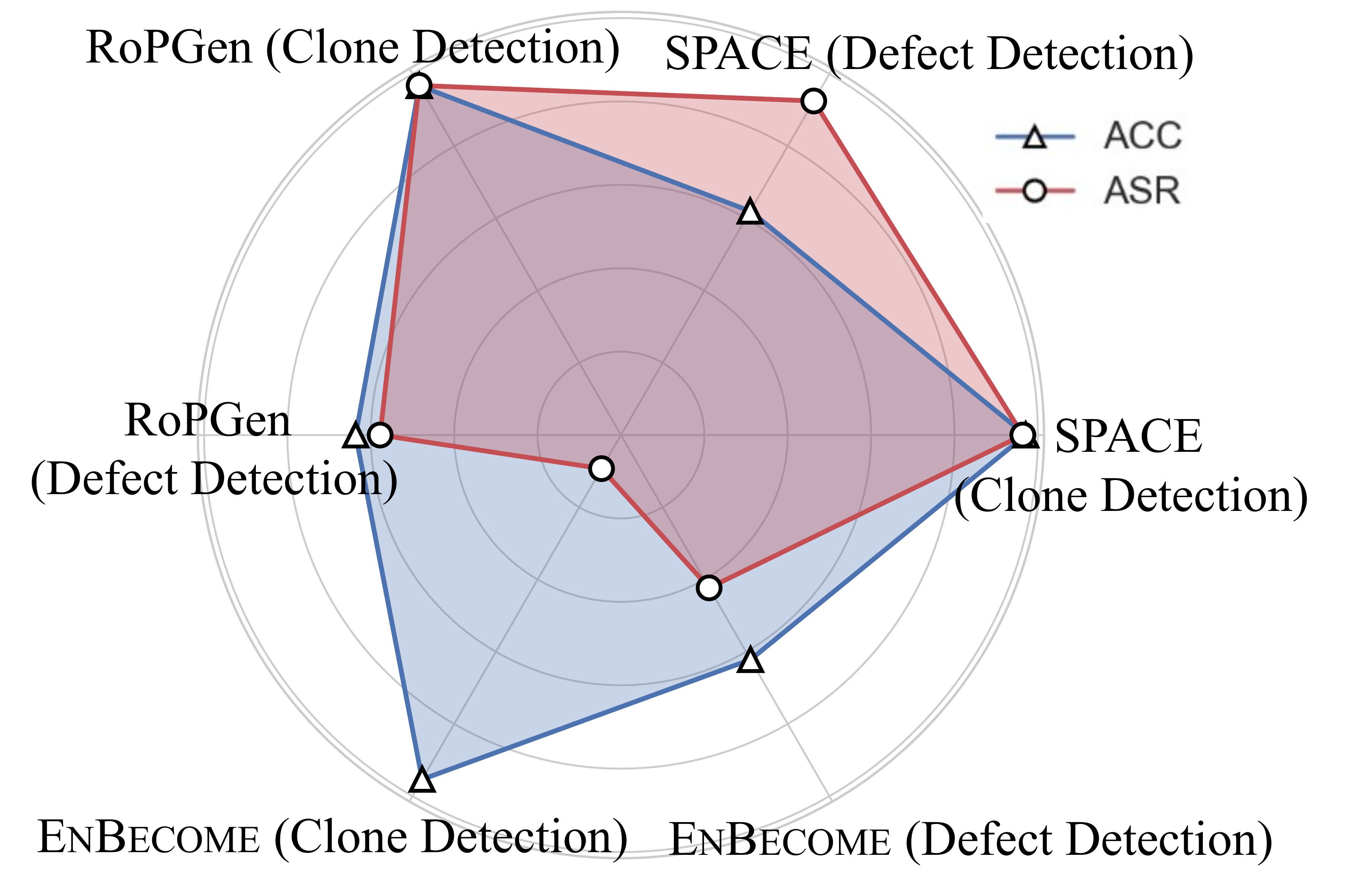} 
        \caption{\ours{} on backdoor attacks.}
        \label{fig:gen_backdoor_attack}
    \end{minipage}
    \vspace{-8pt}
\end{figure*}

\subsection{RQ2. How efficient is \ours{}?}
\label{subsec:time_cost}
\noindent\textit{Design.}
In this section, we compare \ours{} with baseline methods (SPACE and RoPGen) in terms of efficiency, measured through preparation and inference costs. Preparation cost refers to the extra time required before inference: for SPACE and RoPGen, this involves post-training time, while for \ours{}, it consists of generating smoothed samples for the test set.
To ensure a fair comparison, we adopt the default hyperparameters of all baselines and evaluate them on the same test set, which we construct by randomly selecting 1,000 original clean samples. 

\textit{Results and analysis.}
Table~\ref{tab:efficiency} presents the efficiency comparison between \ours{} and other methods, specifically SPACE and RoPGen.
Observe that \ours{} achieves up to 139,682 seconds (\textasciitilde 38 hours) and 341,180 seconds (\textasciitilde 95 hours) of savings compared to RoPGen on clone detection.
While \ours{}’s inference time per sample (0.724s for defect detection, 1.535s for clone detection) is higher than SPACE and RoPGen (0.010\textasciitilde0.016s), this trade-off is acceptable given its training-free, black-box nature and the significant efficiency gain in deployment.
Although \ours{} introduces a slight increase in inference overhead, the average time cost per sample remains under 2 seconds, which is acceptable for practical use.

\subsection{RQ3. The generalization of \ours{}.}
To evaluate \ours{}'s generalization, three different perspectives, including NCM architectures, other code intelligent datasets, and backdoor attacks, are taken into consideration.
\label{subsec:generalization}


\begin{table}[t]
     \tabcolsep=6pt
     \footnotesize
         \renewcommand{\arraystretch}{1.1} 
         \centering
        \caption{The performance of \ours{} on multiple NCM architectures, including traditional architectures (top 3) and code LLMs (bottom 2).}
        \vspace{-6pt}
        \label{tab:model_arch}
        \begin{tabular}{ccccccc}
        \toprule
        \multirow{2}{*}{NCM Arch} & \multicolumn{2}{c}{Victim Model} & \multicolumn{4}{c}{\ours{}} \\
        \cmidrule(lr){2-3}  \cmidrule(lr){4-7} 
         & ACC & ASR & ACC & ASR & Radius & NCRR\\
         \midrule
        GraphCodeBERT & 64.31 & 79.11 & 63.10 & 25.46 & 1.43 & 0.17\\
        CodeBERT & 61.86 & 46.80 & 62.15 & 8.74 & 1.85 & 0.22\\
        CodeT5 & 64.20 & 77.88 & 63.61 & 13.32 & 1.61 & 0.22\\
        \midrule
         StarCoder & 50.14 & 86.54 & 49.01 & 30.04 & 1.74 & 0.24\\
        CodeLlama & 48.16 & 68.74 & 45.75 & 28.65 & 1.02 & 0.13\\
        \bottomrule
        \end{tabular}
    \vspace{-12pt}
\end{table}

\begin{table}[!t]
    \centering
    \footnotesize
     \setlength{\abovecaptionskip}{2pt}  
    \setlength{\belowcaptionskip}{2pt}
    \caption{Ablation study.
    Parameters $N$ and $\eta$ of \ours{} are considered. $N$ denotes the smoothed sample count and $\eta$ denotes the perturbation rate.}
    \label{tab:ablation_study}
    \tabcolsep=5pt
\begin{tabular}{lcccccccc}
\toprule
\multirow{3}{*}{Param: $N$} & \multicolumn{2}{c}{$N$=0} & \multicolumn{2}{c}{$N$=100} & \multicolumn{2}{c}{$N$=1000} & \multicolumn{2}{c}{$N$=10000} \\
\cmidrule(lr){2-3} \cmidrule(lr){4-5} \cmidrule(lr){6-7} \cmidrule(lr){8-9}
 & ACC & ASR & ACC & ASR & ACC & ASR & ACC & ASR \\
 \cmidrule(lr){2-9}
 & 64.3 & 47.0 & 62.9 & 8.8 & 63.3 & 8.9 & 63.2 & 8.9 \\
 \midrule
\multirow{3}{*}{Param: $\eta$} & \multicolumn{2}{c}{$\eta$=0} & \multicolumn{2}{c}{$\eta$=0.3} & \multicolumn{2}{c}{$\eta$=0.6} & \multicolumn{2}{c}{$\eta$=1.0} \\
\cmidrule(lr){2-3} \cmidrule(lr){4-5} \cmidrule(lr){6-7} \cmidrule(lr){8-9}
 & ACC & ASR & ACC & ASR & ACC & ASR & ACC & ASR \\
 \cmidrule(lr){2-9}
 & 64.1 & 47.0 & 63.1 & 10.2 & 63.1 & 9.4 & 62.9 & 9.1 \\
 \bottomrule
\end{tabular}
\vspace{-12pt}
\end{table}

\noindent\textbf{\ours{} on multiple model architectures}
\newline
\noindent\textit{Design.}
We evaluate \ours{} on five NCM architectures: GraphCodeBERT~\cite{guo2020graphcodebert}, CodeBERT~\cite{feng2020codebert}, CodeT5~\cite{2021-CodeT5}, and two code LLMs, StarCoder~\cite{2023-starcoderplus} and CodeLlama~\cite{roziere2023code}. The best attack as shown in Table~\ref{tab:main_evaluation}, ALERT~\cite{2022-ALERT}, is used to generate adversarial examples on the task of defect detection.

\textit{Results and analysis.}
Table~\ref{tab:model_arch} summarizes \ours{}'s performance across NCM architectures, evaluating on four different metrics, ACC, ASR, Radius, and NCRR. \ours{} significantly lowers ASR with minimal ACC impact. ASR drops from 79.11\% to 25.46\% (GraphCodeBERT), 46.80\% to 8.74\% (CodeBERT), and 77.88\% to 13.32\% (CodeT5), achieving strong Radius/NCRR values. Code LLMs show similar trends, with StarCoder's ASR reducing from 86.54\% to 30.04\% (Radius: 1.74, NCRR: 0.24) and CodeLlama’s from 68.74\% to 28.65\% (Radius: 1.02, NCRR: 0.13), though with slightly lower ACC due to their black-box nature. CodeLlama’s broader language support makes it more vulnerable than StarCoder. Overall, \ours{} effectively enhances robustness across different NCM architectures, substantially reducing ASR while maintaining model performance.

\noindent\textbf{\ours{} on other datasets.}

\noindent\textit{Design.}
We evaluate \ours{}'s generalization on OJ dataset~\cite{2016-Convolutional-Neural-Networks-over-Tree-Structures} for functionality classification and CodeChef~\cite{zhang2022towards} for defect detection. OJ dataset\cite{codexglue} is based on the Open Judge benchmark, while CodeChef categorizes execution results as ``OK'' (no defect), ``WA,'' ``TLE,'' and ``RE'' (defects).

\textit{Results and analysis.}
Figure~\ref{fig:gen_other_dataset} shows \ours{} on OJ and CodeChef. \ours{} reduces ASR from 32.10\% to 1.93\% on OJ (96.72\% ACC) and from 81.53\% to 17.12\% on CodeChef with only a 4.26\% ACC drop. These results confirm \ours{}'s robustness in enhancing NCMs across diverse intelligent code understanding tasks.

\noindent\textbf{\ours{} on backdoor attacks.}

\noindent\textit{Design.}
To evaluate \ours{}'s performance against various security threats, we consider backdoor attacks, specifically BadCode~\cite{2023-BadCode}, a state-of-the-art method. Unlike traditional adversarial attacks, which induce misclassification via AEs, BadCode implants backdoors that trigger misclassification when specific patterns appear.

\textit{Results and analysis.}
Figure~\ref{fig:gen_backdoor_attack} shows \ours{}'s performance against backdoor attacks in defect and clone detection. In the radar chart, ACC (blue) and ASR (red) are compared across RoPGen, SPACE, and \ours{}. For clone detection, \ours{} reduces ASR from over 96\% to 9.30\%, with minimal ACC drop (95.87\% to 95.35\%). For defect detection, it lowers ASR from 91.55\% to 42.31\% while maintaining 62.18\% ACC, highlighting strong defense capability.

\subsection{RQ4. Ablation Study.}
\label{subsec:ablation_study}
We analyze how \ours{}'s performance is affected by the smoothed sample count $N$ and perturbation rate $\eta$.
All evaluations utilize 1000 defect detection samples with adversarial examples generated by ALERT~\cite{2022-ALERT}.

\noindent\textbf{Influence of the smoothed sample number $N$.}
As outlined in Section~\ref{sec:meth}, \ours{} generates $N$ perturbed samples per input. We evaluate its impact with $N = 100, 1000, 10000$, and revert to the victim NCM when $N = 0$. 
Table~\ref{tab:ablation_study} presents the results. Without perturbations, the victim NCM achieves 64.3\% ACC but suffers from a high ASR of 47.0\%, revealing its vulnerability. With $N = 100$, ACC slightly drops to 62.9\%, but ASR significantly improves to 8.8\%, demonstrating increased robustness with minimal accuracy loss. Beyond $N = 1000$, ACC and ASR stabilize, indicating no further gains in robustness. For efficiency, we set $N = 100$.

\noindent\textbf{Influence of the perturbation rate $\eta$.}
\ours{} perturbs identifiers at a rate of $\eta$ per smoothed sample. We evaluate its impact by testing $\eta = 0.3, 0.6, 1.0$, with $\eta = 0$ reverting to the baseline NCM.  
Table~\ref{tab:ablation_study} presents the results. Without perturbation ($\eta=0$), ASR is 47.0\%, showing high vulnerability. Increasing $\eta$ to 0.3 sharply reduces ASR to 10.2\% with minimal ACC impact. Further increases to 0.6 and 1.0 lower ASR to 9.4\% and 9.1\%, respectively, with stable ACC. Since improvements plateau at $\eta=1.0$, moderate perturbation rates offer the best balance between robustness and accuracy.

\section{Related Works}
\label{sec:related_works}
In this section, we introduce related works of this paper.

\noindent\textbf{Neural Code Models.} 
Neural Code Models (NCMs) leverage deep learning for code-related tasks~\cite{palacio2024toward}, excelling in areas like code search~\cite{gu2018deep, sachdev2018retrieval, 2022-Code-Search-based-on-Context-aware-Code-Translation}, summarization~\cite{leclair2019neural, haque2020improved, ahmad2020transformer, 2024-ESALE, 2024-LLM4CodeSum}, and defect detection~\cite{wu2023large, zhou2019devign}. Traditionally, they treat code as text~\cite{mastropaolo2021studying}, but some incorporate structural representations (e.g., AST, CFG, DFG)~\cite{2019-ASTNN, wan2019multi, zeng2023degraphcs}.  
Advances in transfer learning with models like BERT~\cite{kenton2019bert} and T5~\cite{2020-T5} have driven NCM development. Pre-trained models such as CodeBERT~\cite{feng2020codebert} and GraphCodeBERT~\cite{guo2020graphcodebert} enhance understanding through token prediction and graph-based relationships. Recently, large language models (LLMs) specialized for code, such as StarCoder~\cite{2023-StarCoder} and CodeLlama~\cite{roziere2023code}, have redefined NCM capabilities. 
We focus on NCMs, including code LLMs, for key software engineering code understanding tasks like functionality classification~\cite{2016-Convolutional-Neural-Networks-over-Tree-Structures, 2019-ASTNN}, defect detection~\cite{wu2023large, zhou2019devign}, and clone detection~\cite{2020-Functional-Code-Clone-Detection, svajlenko2014towards, 2023-Literature-Review-Code-Similarity, 2024-AdaCCD}, categorized by their specific functions.

\noindent\textbf{Adversarial Robustness of Neural Code Models.} 
Adversarial robustness refers to a model's ability to maintain correct predictions under adversarial perturbations~\cite{2022-li-adv-semi, 2020-li-adv-targeted}. 
Recent work has adapted NLP adversarial techniques to neural code models (NCMs), developing both attacks and defenses. 
Similarly, the adversarial robustness of NCMs refers to resisting perturbations, such as alerting identifiers, without significantly degrading performance. 
The research into the adversarial robustness of NCMs focuses on exploiting the NCMs used in code intelligence tasks~\cite{2020-MHM, 2022-ALERT, 2023-CODA}. 
For instance, Yang et al.~\cite{2022-ALERT} use greedy search and genetic algorithms to generate natural adversarial samples, while Zhang et al.~\cite{2020-MHM} employ Metropolis-Hastings sampling for identifier renaming. Other works leverage gradient-based optimization~\cite{2023-discrete-adv, 2020-strata, zhang2022towards} or apply code transformations to enhance generalization~\cite{2022-ropgen}. Some also design adversarial training frameworks tailored to code-specific constraints~\cite{2022-SPACE}. However, these methods require post-training, resulting in high computational cost.
In contrast, we introduce \ours{}, a black-box and training-free approach that strengthens NCM robustness without modifying model parameters while also reporting a theoretical bound.

\section{Threats to validate}
\label{sec:threat_to_validate}
In this section, we discuss the potential threats to the validity of \ours{}, considering internal and external factors.
\newline
\noindent\textbf{Internal Validity.}
Internal validity involves two aspects: parameter sensitivity and accuracy reduction. The effectiveness of \ours{} depends on hyper-parameters such as the number of smoothed samples $N$ and perturbation rate $\eta$, whose impact is examined through ablation studies with standard settings. Although estimating the certified radius involves theoretical approximations that may introduce minor inaccuracies, our results show that \ours{} maintains competitive and acceptable accuracy in practice.
\newline
\noindent\textbf{External Validity.}
The external validity threats focus on the generalizability across different NCM architectures. Although we have evaluated \ours{} on several NCM architectures, including CodeBERT, CodeT5, GraphCodeBERT, StarCoder, and CodeLlama, the findings may not generalize to all NCMs for code understanding, especially those with novel architectures or non-standard training regimes.
To address this issue, \ours{} operates independently of the internal parameters of NCMs, relying only on model output predictions. 
This design maximizes the generalizability of \ours{} across different NCMs as a black-box adaptable framework.

By addressing these threats, we aim to reinforce the reliability of our findings and ensure a robust evaluation of \ours{} for NCM defense in code understanding tasks.
\section{Conclusion}
\label{sec:conclusion}


We propose \ours{}, a training-free, black-box method that enhances and reports the robustness boundary of NCMs for intelligent code understanding tasks. Operating at inference time, \ours{} significantly reduces ASR with minimal performance loss and requires no post-training. It is the first to provide certified robustness for NCMs in intelligent code understanding tasks. Extensive evaluations show that \ours{} effectively mitigates adversarial threats and generalizes well across diverse NCM architectures.

\bibliographystyle{IEEEtran}
\bibliography{sample-base}

@String{Computer = "{IEEE} Computer" }

@String{Springer = "Springer-Verlag" }

@article{shin2010evaluating,
  title={Evaluating complexity, code churn, and developer activity metrics as indicators of software vulnerabilities},
  author={Shin, Yonghee and Meneely, Andrew and Williams, Laurie and Osborne, Jason A},
  journal={IEEE transactions on software engineering},
  volume={37},
  number={6},
  pages={772--787},
  year={2010},
  publisher={IEEE}
}

@inproceedings{bielik2020adversarial,
	address = {Virtual Event},
	author = {Pavol Bielik and Martin T. Vechev},
	booktitle = {Proceedings of the 37th International Conference on Machine Learning},
	month = {13-18 July},
	pages = {896--907},
	publisher = {{PMLR}},
	title = {Adversarial Robustness for Code},
	volume = {119},
	year = {2020}}

@inproceedings{chen2017adversarial,
	address = {Athens, Greece},
	author = {Lingwei Chen and Yanfang Ye and Thirimachos Bourlai},
	booktitle = {Proceedings of the European Intelligence and Security Informatics Conference},
	month = {September 11-13},
	pages = {99--106},
	publisher = {{IEEE} Computer Society},
	title = {Adversarial Machine Learning in Malware Detection: Arms Race between Evasion Attack and Defense},
	year = {2017}}

@inproceedings{jha2023codeattack,
  title={Codeattack: Code-based adversarial attacks for pre-trained programming language models},
  author={Jha, Akshita and Reddy, Chandan K},
  booktitle={Proceedings of the AAAI Conference on Artificial Intelligence},
  volume={37},
  number={12},
  pages={14892--14900},
  year={2023}
}

@article{liu2021practical,
  title={A practical black-box attack on source code authorship identification classifiers},
  author={Liu, Qianjun and Ji, Shouling and Liu, Changchang and Wu, Chunming},
  journal={IEEE Transactions on Information Forensics and Security},
  volume={16},
  pages={3620--3633},
  year={2021},
  publisher={IEEE}
}

@inproceedings{2021-CodeT5,
	address = {Virtual Event / Punta Cana, Dominican Republic},
	author = {Yue Wang and Weishi Wang and Shafiq R. Joty and Steven C. H. Hoi},
	booktitle = {Proceedings of the 26th Conference on Empirical Methods in Natural Language Processing},
	month = {7-11 November},
	pages = {8696--8708},
	publisher = {Association for Computational Linguistics},
	title = {CodeT5: Identifier-aware Unified Pre-trained Encoder-Decoder Models for Code Understanding and Generation},
	year = {2021}}

@inproceedings{quiring2019misleading,
  title={Misleading authorship attribution of source code using adversarial learning},
  author={Quiring, Erwin and Maier, Alwin and Rieck, Konrad},
  booktitle={28th USENIX Security Symposium (USENIX Security 19)},
  pages={479--496},
  year={2019}
}

@inproceedings{severi2021explanation,
  title={$\{$Explanation-Guided$\}$ backdoor poisoning attacks against malware classifiers},
  author={Severi, Giorgio and Meyer, Jim and Coull, Scott and Oprea, Alina},
  booktitle={30th USENIX security symposium (USENIX security 21)},
  pages={1487--1504},
  year={2021}
}

@inproceedings{2022-ropgen,
  title={Ropgen: Towards robust code authorship attribution via automatic coding style transformation},
  author={Li, Zhen and Chen, Guenevere and Chen, Chen and Zou, Yayi and Xu, Shouhuai},
  booktitle={Proceedings of the 44th International Conference on Software Engineering},
  pages={1906--1918},
  year={2022}
}

@article{2022-SPACE,
  title={Semantic-preserving adversarial code comprehension},
  author={Li, Yiyang and Wu, Hongqiu and Zhao, Hai},
  journal={arXiv preprint arXiv:2209.05130},
  year={2022}
}

@inproceedings{zhou2019devign,
	address = {Vancouver, BC, Canada},
	author = {Yaqin Zhou and Shangqing Liu and Jing Kai Siow and Xiaoning Du and Yang Liu},
	booktitle = {Proceedings of the 33rd Annual Conference on Neural Information Processing Systems},
	month = {December 8-14},
	pages = {10197--10207},
	publisher = {Curran Associates Inc.},
	title = {Devign: Effective Vulnerability Identification by Learning Comprehensive Program Semantics via Graph Neural Networks},
	year = {2019}}

@article{codexglue,
  title={Codexglue: A machine learning benchmark dataset for code understanding and generation},
  author={Lu, Shuai and Guo, Daya and Ren, Shuo and Huang, Junjie and Svyatkovskiy, Alexey and Blanco, Ambrosio and Clement, Colin and Drain, Dawn and Jiang, Daxin and Tang, Duyu and others},
  journal={arXiv preprint arXiv:2102.04664},
  year={2021}
}

@inproceedings{svajlenko2014towards,
  title={Towards a big data curated benchmark of inter-project code clones},
  author={Svajlenko, Jeffrey and Islam, Judith F and Keivanloo, Iman and Roy, Chanchal K and Mia, Mohammad Mamun},
  booktitle={2014 IEEE International Conference on Software Maintenance and Evolution},
  pages={476--480},
  year={2014},
  organization={IEEE}
}

@article{2022-li-adv-semi,
  title={Semi-supervised robust training with generalized perturbed neighborhood},
  author={Li, Yiming and Wu, Baoyuan and Feng, Yan and Fan, Yanbo and Jiang, Yong and Li, Zhifeng and Xia, Shu-Tao},
  journal={Pattern Recognition},
  volume={124},
  pages={108472},
  year={2022},
  publisher={Elsevier}
}

@article{2023-discrete-adv,
  title={Discrete adversarial attack to models of code},
  author={Gao, Fengjuan and Wang, Yu and Wang, Ke},
  journal={Proceedings of the ACM on Programming Languages},
  volume={7},
  number={PLDI},
  pages={172--195},
  year={2023},
  publisher={ACM New York, NY, USA}
}

@article{2020-strata,
  title={STRATA: simple, gradient-free attacks for models of code},
  author={Springer, Jacob M and Reinstadler, Bryn Marie and O'Reilly, Una-May},
  journal={arXiv preprint arXiv:2009.13562},
  year={2020}
}

@inproceedings{2020-li-adv-targeted,
	address = {Glasgow, UK},
	author = {Jiawang Bai and Bin Chen and Yiming Li and Dongxian Wu and Weiwei Guo and Shu{-}Tao Xia and En{-}Hui Yang},
	booktitle = {Proceedings of the 16th European Conference on Computer Vision},
	month = {August 23-28},
	pages = {618--634},
	publisher = {Springer},
	title = {Targeted Attack for Deep Hashing Based Retrieval},
	year = {2020}}

@article{2024-CodeLM-Security,
	author = {Yuchen Chen and Weisong Sun and Chunrong Fang and Zhenpeng Chen and Yifei Ge and Tingxu Han and Quanjun Zhang and Yang Liu and Zhenyu Chen and Baowen Xu},
	journal = {CoRR},
	number = {1},
	pages = {1-63},
	title = {Security of Language Models for Code: {A} Systematic Literature Review},
	volume = {abs/2410.15631},
	year = {2024}}

@article{wu2023large,
	author = {Wu, Yonghao and Li, Zheng and Zhang, Jie M and Papadakis, Mike and Harman, Mark and Liu, Yong},
	journal = {arXiv preprint arXiv:2308.15276},
	title = {Large Language Models in Fault Localisation},
	year = {2023}}

@inproceedings{wei2017supervised,
  title={Supervised deep features for software functional clone detection by exploiting lexical and syntactical information in source code.},
  author={Wei, Huihui and Li, Ming},
  booktitle={IJCAI},
  pages={3034--3040},
  year={2017}
}

@article{2023-Literature-Review-Code-Similarity,
	author = {Morteza Zakeri Nasrabadi and Saeed Parsa and Mohammad Ramezani and Chanchal Roy and Masoud Ekhtiarzadeh},
	journal = {Journal of Systems and Software},
	pages = {111796},
	title = {A systematic literature review on source code similarity measurement and clone detection: Techniques, applications, and challenges},
	volume = {204},
	year = {2023}}

@inproceedings{2024-AdaCCD,
	address = {Vancouver, Canada},
	author = {Yangkai Du and Tengfei Ma and Lingfei Wu and Xuhong Zhang and Shouling Ji},
	booktitle = {Proceedings of Thirty-Eighth {AAAI} Conference on Artificial Intelligence},
	month = {February 20-27},
	pages = {17942--17950},
	publisher = {{AAAI} Press},
	title = {AdaCCD: Adaptive Semantic Contrasts Discovery Based Cross Lingual Adaptation for Code Clone Detection},
	year = {2024}}

@inproceedings{2020-Functional-Code-Clone-Detection,
	address = {Virtual Event, USA},
	author = {Chunrong Fang and Zixi Liu and Yangyang Shi and Jeff Huang and Qingkai Shi},
	booktitle = {Proceedings of the 29th {ACM} {SIGSOFT} International Symposium on Software Testing and Analysis},
	month = {July 18-22},
	pages = {516--527},
	publisher = {{ACM}},
	title = {Functional code clone detection with syntax and semantics fusion learning},
	year = {2020}}

@inproceedings{feng2020codebert,
	address = {Online Event},
	author = {Zhangyin Feng and Daya Guo and Duyu Tang and Nan Duan and Xiaocheng Feng and Ming Gong and Linjun Shou and Bing Qin and Ting Liu and Daxin Jiang and Ming Zhou},
	booktitle = {Proceedings of the 25th Conference on Empirical Methods in Natural Language Processing: Findings},
	month = {16-20 November},
	pages = {1536--1547},
	publisher = {Association for Computational Linguistics},
	title = {CodeBERT: {A} Pre-Trained Model for Programming and Natural Languages},
	year = {2020}}

@article{roziere2023code,
  title={Code llama: Open foundation models for code},
  author={Roziere, Baptiste and Gehring, Jonas and Gloeckle, Fabian and Sootla, Sten and Gat, Itai and Tan, Xiaoqing Ellen and Adi, Yossi and Liu, Jingyu and Remez, Tal and Rapin, J{\'e}r{\'e}my and others},
  journal={arXiv preprint arXiv:2308.12950},
  year={2023}
}

@article{2023-Survey-on-LLMs-for-SE,
	author = {Quanjun Zhang and Chunrong Fang and Yang Xie and Yaxin Zhang and Yun Yang and Weisong Sun and Shengcheng Yu and Zhenyu Chen},
	journal = {CoRR},
	number = {1},
	pages = {1--57},
	title = {A Survey on Large Language Models for Software Engineering},
	volume = {abs/2312.15223},
	year = {2023}}

@article{2023-LLMs-for-SE-A-Literature-Review,
	author = {Xinyi Hou and Yanjie Zhao and Yue Liu and Zhou Yang and Kailong Wang and Li Li and Xiapu Luo and David Lo and John C. Grundy and Haoyu Wang},
	journal = {CoRR},
	number = {1},
	pages = {1--62},
	title = {Large Language Models for Software Engineering: {A} Systematic Literature Review},
	volume = {abs/2308.10620},
	year = {2023}}

@article{2022-Use-of-DL-in-SE-Research,
	author = {Cody Watson and Nathan Cooper and David Nader{-}Palacio and Kevin Moran and Denys Poshyvanyk},
	journal = {{ACM} Transactions on Software Engineering and Methodology},
	number = {2},
	pages = {32:1--32:58},
	title = {A Systematic Literature Review on the Use of Deep Learning in Software Engineering Research},
	volume = {31},
	year = {2022}}

@inproceedings{2022-ALERT,
  title={Natural attack for pre-trained models of code},
  author={Yang, Zhou and Shi, Jieke and He, Junda and Lo, David},
  booktitle={Proceedings of the 44th International Conference on Software Engineering},
  pages={1482--1493},
  year={2022}
}

@inproceedings{2020-MHM,
	address = {New York, NY, USA},
	author = {Huangzhao Zhang and Zhuo Li and Ge Li and Lei Ma and Yang Liu and Zhi Jin},
	booktitle = {Proceedings of the 34th {AAAI} Conference on Artificial Intelligence},
	month = {February 7-12},
	pages = {1169--1176},
	publisher = {{AAAI} Press},
	title = {Generating Adversarial Examples for Holding Robustness of Source Code Processing Models},
	year = {2020}}

@inproceedings{guo2020graphcodebert,
	address = {Virtual Event, Austria},
	author = {Daya Guo and Shuo Ren and Shuai Lu and Zhangyin Feng and Duyu Tang and Shujie Liu and Long Zhou and Nan Duan and Alexey Svyatkovskiy and Shengyu Fu and Michele Tufano and Shao Kun Deng and Colin B. Clement and Dawn Drain and Neel Sundaresan and Jian Yin and Daxin Jiang and Ming Zhou},
	booktitle = {Proceedings of the 9th International Conference on Learning Representations},
	month = {May 3-7},
	pages = {1--12},
	publisher = {OpenReview.net},
	title = {GraphCodeBERT: Pre-training Code Representations with Data Flow},
	year = {2021}}

@misc{2024-Our-Artifacts,
  author       = {Tingxu Han},
  title        = {Our Artifacts},
  howpublished = {\url{https://github.com/GeniusHTX/SecCode}},
}

@article{2023-starcoderplus,
  author = {Bigcode},
  title = {StarCoderPlus},
  journal = {Hugging Face Blog},
  year = {2023},
  note = {https://huggingface.co/bigcode/starcoderplus}
}

@article{palacio2024toward,
  title={Toward a theory of causation for interpreting neural code models},
  author={Palacio, David N and Velasco, Alejandro and Cooper, Nathan and Rodriguez, Alvaro and Moran, Kevin and Poshyvanyk, Denys},
  journal={IEEE Transactions on Software Engineering},
  year={2024},
  publisher={IEEE}
}

@inproceedings{gu2018deep,
	address = {Gothenburg, Sweden},
	author = {Xiaodong Gu and Hongyu Zhang and Sunghun Kim},
	booktitle = {Proceedings of the 40th International Conference on Software Engineering},
	month = {May 27 - June 03},
	pages = {933--944},
	publisher = {{ACM}},
	title = {Deep code search},
	year = {2018}}

@inproceedings{sachdev2018retrieval,
  title={Retrieval on source code: a neural code search},
  author={Sachdev, Saksham and Li, Hongyu and Luan, Sifei and Kim, Seohyun and Sen, Koushik and Chandra, Satish},
  booktitle={Proceedings of the 2nd ACM SIGPLAN International Workshop on Machine Learning and Programming Languages},
  pages={31--41},
  year={2018}
}

@inproceedings{leclair2019neural,
  title={A neural model for generating natural language summaries of program subroutines},
  author={LeClair, Alexander and Jiang, Siyuan and McMillan, Collin},
  booktitle={Proceedings of the 2019 IEEE/ACM 41st International Conference on Software Engineering},
  pages={795--806},
  year={2019},
  organization={IEEE}
}

@article{2024-ESALE,
	author = {Chunrong Fang and Weisong Sun and Yuchen Chen and Xiao Chen and Zhao Wei and Quanjun Zhang and Yudu You and Bin Luo and Yang Liu and Zhenyu Chen},
	journal = {IEEE Transactions on Software Engineering (Early Access)},
	pages = {1-18},
	title = {ESALE: Enhancing Code-Summary Alignment Learning for Source Code Summarization},
	year = {2024}}

@inproceedings{haque2020improved,
  title={Improved automatic summarization of subroutines via attention to file context},
  author={Haque, Sakib and LeClair, Alexander and Wu, Lingfei and McMillan, Collin},
  booktitle={Proceedings of the 17th International Conference on Mining Software Repositories},
  pages={300--310},
  year={2020}
}

@inproceedings{wan2019multi,
	address = {San Diego, CA, USA},
	author = {Yao Wan and Jingdong Shu and Yulei Sui and Guandong Xu and Zhou Zhao and Jian Wu and Philip S. Yu},
	booktitle = {Proceedings of the 34th International Conference on Automated Software Engineering},
	month = {November 11-15},
	pages = {13--25},
	publisher = {{IEEE}},
	title = {Multi-modal Attention Network Learning for Semantic Source Code Retrieval},
	year = {2019}}

@article{zeng2023degraphcs,
  title={degraphcs: Embedding variable-based flow graph for neural code search},
  author={Zeng, Chen and Yu, Yue and Li, Shanshan and Xia, Xin and Wang, Zhiming and Geng, Mingyang and Bai, Linxiao and Dong, Wei and Liao, Xiangke},
  journal={ACM Transactions on Software Engineering and Methodology},
  volume={32},
  number={2},
  pages={1--27},
  year={2023},
  publisher={ACM New York, NY}
}

@article{2020-T5,
  author       = {Colin Raffel and Noam Shazeer and Adam Roberts and Katherine Lee and Sharan Narang and Michael Matena and Yanqi Zhou and
                  Wei Li and
                  Peter J. Liu},
  title        = {Exploring the Limits of Transfer Learning with a Unified Text-to-Text Transformer},
  journal      = {Journal of Machine Learning Research},
  volume       = {21},
  pages        = {140:1--140:67},
  year         = {2020},
}

@inproceedings{2019-ASTNN,
  author       = {Jian Zhang and Xu Wang and Hongyu Zhang and Hailong Sun and Kaixuan Wang and Xudong Liu},
  title        = {A novel neural source code representation based on abstract syntax tree},
  booktitle    = {Proceedings of the 41st International Conference on Software Engineering},
  pages        = {783--794},
  publisher    = {{IEEE} / {ACM}},
  address      = {Montreal, QC, Canada},
  month        = {May 25-31},
  year         = {2019},
}

@inproceedings{2016-Convolutional-Neural-Networks-over-Tree-Structures,
  author       = {Lili Mou and Ge Li and Lu Zhang and Tao Wang and Zhi Jin},
  title        = {Convolutional Neural Networks over Tree Structures for Programming Language Processing},
  booktitle    = {Proceedings of the 30th {AAAI} Conference on Artificial Intelligence},
  pages        = {1287--1293},
  publisher    = {{AAAI} Press},
  address      = {Phoenix, Arizona, USA},
  month        = {February 12-17},
  year         = {2016},
}

@inproceedings{mastropaolo2021studying,
  title={Studying the usage of text-to-text transfer transformer to support code-related tasks},
  author={Mastropaolo, Antonio and Scalabrino, Simone and Cooper, Nathan and Palacio, David Nader and Poshyvanyk, Denys and Oliveto, Rocco and Bavota, Gabriele},
  booktitle={Proceedings of the 2021 IEEE/ACM 43rd International Conference on Software Engineering},
  pages={336--347},
  year={2021},
  organization={IEEE}
}

@inproceedings{ahmad2020transformer,
	address = {Online},
	author = {Wasi Uddin Ahmad and Saikat Chakraborty and Baishakhi Ray and Kai{-}Wei Chang},
	booktitle = {Proceedings of the 58th Annual Meeting of the Association for Computational Linguistics},
	month = {July 5-10},
	pages = {4998--5007},
	publisher = {Association for Computational Linguistics},
	title = {A Transformer-based Approach for Source Code Summarization},
	year = {2020}}

@article{2024-LLM4CodeSum,
	author = {Weisong Sun and Yun Miao and Yuekang Li and Hongyu Zhang and Chunrong Fang and Yi Liu and Gelei Deng and Yang Liu and Zhenyu Chen},
	journal = {CoRR},
	number = {1},
	pages = {1--13},
	title = {Source Code Summarization in the Era of Large Language Models},
	volume = {abs/2407.07959},
	year = {2024}}

@inproceedings{2022-Code-Search-based-on-Context-aware-Code-Translation,
  author       = {Weisong Sun and Chunrong Fang and
                  Yuchen Chen and
                  Guanhong Tao and
                  Tingxu Han and
                  Quanjun Zhang},
  title        = {Code Search based on Context-aware Code Translation},
  booktitle    = {Proceedings of the 44th {IEEE/ACM} 44th International Conference on Software Engineering},
  pages        = {388--400},
  publisher    = {{ACM}},
  address      = {May 25-27},
  month        = {Pittsburgh, PA, USA},
  year         = {2022},
}

@inproceedings{kenton2019bert,
  title={Bert: Pre-training of deep bidirectional transformers for language understanding},
  author={Kenton, Jacob Devlin Ming-Wei Chang and Toutanova, Lee Kristina},
  booktitle={Proceedings of naacL-HLT},
  volume={1},
  pages={2},
  year={2019},
  organization={Minneapolis, Minnesota}
}

@article{2023-StarCoder,
	author = {Raymond Li and Loubna Ben Allal and Yangtian Zi and Niklas Muennighoff and Denis Kocetkov and Chenghao Mou and Marc Marone and Christopher Akiki and Jia Li and Jenny Chim and Qian Liu and Evgenii Zheltonozhskii and Terry Yue Zhuo and Thomas Wang and Olivier Dehaene and Mishig Davaadorj and Joel Lamy{-}Poirier and Jo{\~{a}}o Monteiro and Oleh Shliazhko and Nicolas Gontier and Nicholas Meade and Armel Zebaze and Ming{-}Ho Yee and Logesh Kumar Umapathi and Jian Zhu and Benjamin Lipkin and Muhtasham Oblokulov and Zhiruo Wang and Rudra Murthy V and Jason Stillerman and Siva Sankalp Patel and Dmitry Abulkhanov and Marco Zocca and Manan Dey and Zhihan Zhang and Nour Moustafa{-}Fahmy and Urvashi Bhattacharyya and Wenhao Yu and Swayam Singh and Sasha Luccioni and Paulo Villegas and Maxim Kunakov and Fedor Zhdanov and Manuel Romero and Tony Lee and Nadav Timor and Jennifer Ding and Claire Schlesinger and Hailey Schoelkopf and Jan Ebert and Tri Dao and Mayank Mishra and Alex Gu and Jennifer Robinson and Carolyn Jane Anderson and Brendan Dolan{-}Gavitt and Danish Contractor and Siva Reddy and Daniel Fried and Dzmitry Bahdanau and Yacine Jernite and Carlos Mu{\~{n}}oz Ferrandis and Sean Hughes and Thomas Wolf and Arjun Guha and Leandro von Werra and Harm de Vries},
	journal = {CoRR},
	number = {1},
	pages = {1--44},
	title = {StarCoder: may the source be with you!},
	volume = {abs/2305.06161},
	year = {2023}}

@inproceedings{2023-CODA,
  title={Code difference guided adversarial example generation for deep code models},
  author={Tian, Zhao and Chen, Junjie and Jin, Zhi},
  booktitle={2023 38th IEEE/ACM International Conference on Automated Software Engineering (ASE)},
  pages={850--862},
  year={2023},
  organization={IEEE}
}

@inproceedings{wang2016automatically,
  title={Automatically learning semantic features for defect prediction},
  author={Wang, Song and Liu, Taiyue and Tan, Lin},
  booktitle={Proceedings of the 38th international conference on software engineering},
  pages={297--308},
  year={2016}
}

@inproceedings{2020-TextFooler,
  title={Is bert really robust? a strong baseline for natural language attack on text classification and entailment},
  author={Jin, Di and Jin, Zhijing and Zhou, Joey Tianyi and Szolovits, Peter},
  booktitle={Proceedings of the AAAI conference on artificial intelligence},
  volume={34},
  number={05},
  pages={8018--8025},
  year={2020}
}

@article{2020-bertattack,
  title={Bert-attack: Adversarial attack against bert using bert},
  author={Li, Linyang and Ma, Ruotian and Guo, Qipeng and Xue, Xiangyang and Qiu, Xipeng},
  journal={arXiv preprint arXiv:2004.09984},
  year={2020}
}

@inproceedings{gao2018black,
  title={Black-box generation of adversarial text sequences to evade deep learning classifiers},
  author={Gao, Ji and Lanchantin, Jack and Soffa, Mary Lou and Qi, Yanjun},
  booktitle={2018 IEEE Security and Privacy Workshops (SPW)},
  pages={50--56},
  year={2018},
  organization={IEEE}
}

@article{2023-RanMASK,
  title={Certified robustness to text adversarial attacks by randomized [mask]},
  author={Zeng, Jiehang and Xu, Jianhan and Zheng, Xiaoqing and Huang, Xuanjing},
  journal={Computational Linguistics},
  volume={49},
  number={2},
  pages={395--427},
  year={2023},
  publisher={MIT Press One Broadway, 12th Floor, Cambridge, Massachusetts 02142, USA~…}
}

@article{2020-Safer,
  title={SAFER: A structure-free approach for certified robustness to adversarial word substitutions},
  author={Ye, Mao and Gong, Chengyue and Liu, Qiang},
  journal={arXiv preprint arXiv:2005.14424},
  year={2020}
}

@article{zhang2022towards,
  title={Towards robustness of deep program processing models—detection, estimation, and enhancement},
  author={Zhang, Huangzhao and Fu, Zhiyi and Li, Ge and Ma, Lei and Zhao, Zhehao and Yang, Hua’an and Sun, Yizhe and Liu, Yang and Jin, Zhi},
  journal={ACM Transactions on Software Engineering and Methodology (TOSEM)},
  volume={31},
  number={3},
  pages={1--40},
  year={2022},
  publisher={ACM New York, NY}
}

@article{lozhkov2024starcoder,
  title={Starcoder 2 and the stack v2: The next generation},
  author={Lozhkov, Anton and Li, Raymond and Allal, Loubna Ben and Cassano, Federico and Lamy-Poirier, Joel and Tazi, Nouamane and Tang, Ao and Pykhtar, Dmytro and Liu, Jiawei and Wei, Yuxiang and others},
  journal={arXiv preprint arXiv:2402.19173},
  year={2024}
}

@inproceedings{2023-BADCODE,
	address = {Toronto, Canada},
	author = {Weisong Sun and Yuchen Chen and Guanhong Tao and Chunrong Fang and Xiangyu Zhang and Quanjun Zhang and Bin Luo},
	booktitle = {Proceedings of the 61st Annual Meeting of the Association for Computational Linguistics},
	month = {July 9-14},
	pages = {9692--9708},
	publisher = {Association for Computational Linguistics},
	title = {Backdooring Neural Code Search},
	year = {2023}
}

@article{fang2025multi,
  title={Multi-head ensemble of smoothed classifiers for certified robustness},
  author={Fang, Kun and Tao, Qinghua and Wu, Yingwen and Li, Tao and Huang, Xiaolin and Yang, Jie},
  journal={Neural Networks},
  volume={188},
  pages={107426},
  year={2025},
  publisher={Elsevier}
}

@inproceedings{cohen2019certified,
  title={Certified adversarial robustness via randomized smoothing},
  author={Cohen, Jeremy and Rosenfeld, Elan and Kolter, Zico},
  booktitle={international conference on machine learning},
  pages={1310--1320},
  year={2019},
  organization={PMLR}
}

@inproceedings{li2023sok,
  title={Sok: Certified robustness for deep neural networks},
  author={Li, Linyi and Xie, Tao and Li, Bo},
  booktitle={2023 IEEE symposium on security and privacy (SP)},
  pages={1289--1310},
  year={2023},
  organization={IEEE}
}

@inproceedings{zhang2024text,
  title={Text-crs: A generalized certified robustness framework against textual adversarial attacks},
  author={Zhang, Xinyu and Hong, Hanbin and Hong, Yuan and Huang, Peng and Wang, Binghui and Ba, Zhongjie and Ren, Kui},
  booktitle={2024 IEEE Symposium on Security and Privacy (SP)},
  pages={2920--2938},
  year={2024},
  organization={IEEE}
}

@inproceedings{cevher2025certified,
  title={Certified Robustness Under Bounded Levenshtein Distance},
  author={Cevher, Volkan},
  booktitle={The Thirteenth International Conference on Learning Representations},
  year={2025}
}

@inproceedings{ye2020safer,
  title={SAFER: A Structure-free Approach for Certified Robustness to Adversarial Word Substitutions},
  author={Ye, Mao and Gong, Chengyue and Liu, Qiang},
  booktitle={Annual Meeting of the Association for Computational Linguistics (ACL)},
  year={2020}
}
\end{document}